\documentclass[12pt]{iopart}
\expandafter\let\csname equation*\endcsname\relax
\expandafter\let\csname endequation*\endcsname\relax
\usepackage{amssymb}
\usepackage{amsmath}
\usepackage{amsthm}
\usepackage{thmtools}   
\usepackage[numbers,sort&compress,square]{natbib}
\usepackage{bm}
\usepackage{graphicx}
\usepackage{calc}
\usepackage[dvipsnames]{xcolor}
\usepackage{hyperref}
\usepackage{tabularray}
\usepackage[capitalise]{cleveref}

\newtheorem{theorem}{Theorem}[section]
\newtheorem{lemma}[theorem]{Lemma}
\newtheorem{proposition}[theorem]{Proposition}

\newtheorem{conjecture}{Conjecture}

\theoremstyle{definition}
\newtheorem{definition}[theorem]{Definition}

\newcommand{\Vin}{V_\text{in}}

\newcommand{\Ein}{E_\text{in}}

\newcommand{\Bogo}{Bogoyavlenskij~}
\newcommand{\Evri}{Evripidou~}

\begin{document}

\title[New Families of Integrable Graphs]{New Families and Operations on Integrable Graphs}

\author{Matthew Visomirski$^1$ and Christopher Griffin$^{2,*}$}
\address{
	$^1$Department of Physics,
	University of Texas at Austin,
    Austin, TX 78705
    }
\address{
	$^2$Applied Research Laboratory,
	The Pennsylvania State University,
    University Park, PA 16802
    }
\address{$^*$Corresponding Author}

\eads{\mailto{mv33573@my.utexas.edu}, \mailto{griffinch@psu.edu}}
\date{\today~-~Preprint}

\begin{abstract} In this paper, we investigate the integrability of Lotka-Volterra (replicator) systems arising from interaction matrices generated from corresponding graph structures, continuing work started by Visomirski and Griffin [J. Phys. A., 58:015701, 2025] and \Evri et al. [J. Phys. A., 55:325201, 2022] (among others). In particular, we define a new family of graphs, the spoked graphs, and show that all dynamics generated from this family are integrable. In reference to \Evri et al. (2022), we define a new anti-cloning operator and show that its action on balanced tournament graphs (with odd vertex count) generates new graphs whose dynamics are integrable. Interestingly, we provide numerical evidence that this anti-cloning operation leads to chaotic behaviour when applied to other graph families (e.g., the directed cycles that generate the classically integrable Volterra lattice). This work completes a taxonomy of all integrable dynamics generated by directed graphs with up to six vertices started by Visomirski and Griffin (2025), and suggests several future directions of study on this topic.
\end{abstract}

\vspace{2pc}
\noindent{\it Keywords}: integrable system, replicator equation, Lotka-Volterra equation, graphs

\submitto{\jpa}
\maketitle

\section{Introduction}
In 1976, Smale \cite{S76} noted the potential complexity of the dynamics of the generalised Lotka-Volterra dynamics, stating that these simple-looking systems can produce any conceivable trajectory behaviour. Surprisingly, almost a century ago Volterra's work \cite{volterra1931leccons,volterra1937principes,volterra1931variations} as well as the myriad of work that followed \cite{christodoulidi2019new,fernandes1995hamiltonian,visomirski2025integrability,bountis2021integrable,peixe2015lotka,van2025quadratic,charalambides2015generalized,I87,I08,BIY08,EKV21,EKV22,PG23} shows that while these systems can indeed produce highly complex dynamics, they have hidden within them infinite families of integrable dynamics. This fact is even more surprising given the general rarity of integrable systems, in the sense that a randomly chosen system of differential equations is unlikely to be integrable, let alone exhibit multiple conserved quantities or even be Hamiltonian. 

In this paper, we focus on a subset of integrable Lotka-Volterra dynamics that are generated (in a formal sense) by directed graph structures. This allows us to define integrability preserving operations on the generating graph structure. In the strictest sense, the resulting class of Lotka-Volterra dynamics are replicator equations (usually found in evolutionary game theory \cite{T15,HS98,HS03}). Let $\mathbf{A} \in \mathbb{R}^{n \times n}$ be an interaction matrix. The replicator dynamics \cite{HS98,HS03} are the system of differential equations with form,
\begin{equation*}
    \dot{u}_i = u_i\left(\mathbf{e}_i^T\mathbf{A}\mathbf{u} - \mathbf{u}^T\mathbf{A}\mathbf{u}\right),
\end{equation*}
where $\mathbf{u} = \langle{u_1,\dots,u_n}\rangle$ is usually thought of as a vector of species proportions and $\mathbf{e}_i$ is the $i^\text{th}$ standard basis vector. As such $\mathbf{x} \in \Delta_{n-1}$, where,
\begin{equation*}
    \Delta_{n-1} = \left\{\mathbf{u} \in \mathbb{R}^n : \sum_i u_i = 1, u_i \geq 0\right\},
\end{equation*}
is the $n-1$ dimensional unit simplex embedded in $\mathbb{R}^n$. Consequently, $H_0 = x_1 + \cdots + x_n$ is immediately seen to be a constant of motion for the replicator equations. In general, the replicator equations are known to be diffeomorphic to the generalised Lotka-Volterra equations using Hofbauer's trick (see \cite{HS98,HS03,H96}). When we restrict to the case that $\mathbf{A}$ is skew-symmetric, the replicator equations simplify to,
\begin{equation}
    \dot{u}_i = u_i\left(\mathbf{e}_i^T\mathbf{A}\mathbf{u}\right),
    \label{eqn:Replicator}
\end{equation}
because $\mathbf{u}^T\mathbf{A}\mathbf{u} = 0$ and the resulting replicator equations are an instance of Lotka-Volterra equations. If we further restrict to the case that $\mathbf{A}$ is skew-symmetric having only entries drawn from the set $\{-1,0,1\}$, then the resulting systems of differential equations can be put into one-to-one correspondence with directed graphs. Let $\mathbf{A} \in \{-1,0,1\}^{n \times n}$ be such a skew-symmetric matrix and let $G_\mathbf{A} = (V,E)$ be a directed graph defined so that $V = \{1,\dots,n\}$ and $E \subset V \times V$ with $(i,j) \in E$ if and only if $A_{ij} = -1$. Biologically speaking, the edge $(i,j)$ indicates that species $i$ is consumed by species $j$. This correspondence establishes the relationship between the directed graphs and the instances of the Lotka-Volterra (replicator) dynamics discussed in this paper. As such, we will often refer to a graph as having a property (e.g., integrability) to mean that its corresponding dynamical system has this property. Determining the interplay between combinatorial properties of the (corresponding) graph structures and the properties of the dynamical systems is a differentiator within the literature on integrability of the Lotka-Volterra system, with some work focusing on the class of systems having graph-theoretic representations (see e.g., \cite{I87,I08,BIY08,PG23,visomirski2025integrability,EKV21,EKV22})and other authors focusing on a distinct class of systems (see e.g., \cite{bountis2021integrable,christodoulidi2019new,fernandes1995hamiltonian,peixe2015lotka,van2025quadratic,charalambides2015generalized}).

In this paper, we restrict our attention to those Lotka-Volterra (replicator) dynamics that can be put into one-to-one correspondence with directed graphs and expand on results first discussed by Visomirski and Griffin in \cite{visomirski2025integrability}, while continuing to build on and use the prior work of Itoh \cite{I87,I08}, \Bogo, Itoh and Yukawa \cite{BIY08} and \Evri, Kassotakis and Vanhaecke \cite{EKV21,EKV22} and Griffin and Paik \cite{PG23}. The main results of this paper, which we make precise in the sequel, are as follows.
\begin{itemize}
    \item We define a \textit{spoking} operation on a class of directed cycles to create a new (infinite) family of integrable graphs that we call the \textit{spoke graphs}. Representatives of this family were observed as outliers in the taxonomy of integrability provided in \cite{visomirski2025integrability}. Here, we formally show these examples are part of this larger integrable family.
    \item We define an \textit{anti-cloning} operation on the odd balanced tournament graphs and show that all members of this family are integrable. Anti-cloning is an homage to the work on cloning found in the work of Evripidou et al. \cite{EKV22}. 
    \item As a result, we can complete the taxonomy of integrable graphs with six or fewer vertices begun by Visomirski and Griffin \cite{visomirski2025integrability}.
\end{itemize}
As before, our ultimate goal is to characterize the behaviour of the replicator dynamics arising from skew-symmetric matrices (with entries in $\{-1,0,1\}$) as a consequence of the combinatorics of their corresponding graphs. This paper takes an important step, effectively identifying all integrable families represented in graphs with 6 or fewer vertices and suggesting that novel results will require either a more (exhaustive) experimental analysis of (e.g.) graphs with 7 vertices or a completely new theoretical approach.

The remainder of this paper is organized as follows: In \cref{sec:Prelim} we provide preliminary definitions and relevant prior results. Results on spoke graphs are provided in \cref{sec:Spoke}. We analyse the anti-cloning operation in \cref{sec:AntiCloning}. A conjecture on the class of \Bogo graphs is provided in \cref{sec:Bogo}. In \cref{sec:Taxonomy} we present the final taxonomy of behaviour for all graphs with six or fewer vertices. Conclusions and future directions are presented in \cref{sec:Conclusions}. 

\section{Preliminary Definitions and Prior Results}\label{sec:Prelim}
We provide mathematical preliminaries, many of which can be found in the work of Visomirski and Griffin \cite{visomirski2025integrability} and Evripidou et al. \cite{EKV22}. Additional information on Poisson structures is provided in Laurent-Gengoux, Pichereau and Vanhaecke's book on the subject \cite{L-GPV12}.

\subsection{Graph Operations}
We formalize the relationship between directed graphs and skew-symmetric matrices with entries drawn from the set $\{-1,0,1\}$. First, note that in the directed graph $G = (V,E)$ the oriented edge $(i,j)$ either does not appear in $E$ or if it does appear, then the edge $(j,i)$ does not appear in $E$. Throughout this work, we assume all graphs are simple in the sense that they have no multi-edges or self-loops (see Griffin \cite{griffin2023applied} for general details on graphs). We now adopt the following results from \Evri et al. \cite{EKV22} and our prior work \cite{visomirski2025integrability}.

\begin{definition}[Skew Symmetric Graph] Let $\mathbf{A} \in \mathbb{R}^{n\times n}$ be a skew-symmetric matrix with entries taken from $\{-1,0,1\}$. The oriented directed graph $G_\mathbf{A} = (V,E)$ has vertex set $V = \{1,\dots,n\}$ and edge set $E \subset V \times V$ so that,
\begin{equation*}
    (i, j) \in E \iff A_{ij} = -1.
\end{equation*}
\end{definition}
We immediately have the  following result. 
\begin{proposition} There is a one-to-one correspondence between simple oriented directed graphs and skew-symmetric matrices with entries $0$,$\pm 1$. \hfill\qed
\label{prop:OneToOne}
\end{proposition}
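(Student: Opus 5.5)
We will exhibit explicit maps in both directions and check that they are well defined and mutually inverse. Fix the vertex set $V=\{1,\dots,n\}$. Let $\mathcal{M}_n$ be the set of skew-symmetric matrices in $\{-1,0,1\}^{n\times n}$. Let $\mathcal{G}_n$ be the set of simple oriented directed graphs on $V$, meaning graphs with no self-loops, no multi-edges, and never both $(i,j)$ and $(j,i)$ in $E$. The forward map is $\Phi(\mathbf{A})=G_\mathbf{A}$, as given by the Skew Symmetric Graph definition.

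First we check that $\Phi$ lands in $\mathcal{G}_n$.
\begin{itemize}
\item Skew-symmetry forces $A_{ii}=0$, so $(i,i)\notin E$ and there are no self-loops.
\item $E$ is a subset of $V\times V$, so there are no multi-edges.
\item If $(i,j)\in E$, then $A_{ij}=-1$, so $A_{ji}=1\neq -1$ and $(j,i)\notin E$. Hence the graph is oriented.
\end{itemize}

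Next we define the inverse $\Psi:\mathcal{G}_n\to\mathcal{M}_n$. Given $G=(V,E)$, set $A_{ij}=-1$ if $(i,j)\in E$, set $A_{ij}=1$ if $(j,i)\in E$, and set $A_{ij}=0$ otherwise. Orientedness guarantees that the first two cases never occur together, so $\Psi$ is well defined. The diagonal is zero because there are no loops. Skew-symmetry holds case by case: $(i,j)\in E$ gives $A_{ij}=-1$ and $A_{ji}=1$, and the roles swap when $(j,i)\in E$. If neither edge is present, both entries are $0$.

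Finally we verify that $\Psi\circ\Phi$ and $\Phi\circ\Psi$ are the identities.
\begin{itemize}
\item For $\Phi\circ\Psi$: the edges of $\Phi(\Psi(G))$ are exactly the pairs with $A_{ij}=-1$, which are exactly the edges of $G$.
\item For $\Psi\circ\Phi$: the entries equal to $-1$ are recovered directly from $E$. An entry $A_{ij}=1$ holds if and only if $A_{ji}=-1$, which holds if and only if $(j,i)\in E$, so those entries are recovered as well. Every remaining entry is $0$ in both matrices, since the entries lie in $\{-1,0,1\}$.
\end{itemize}

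The only step needing any care is the well-definedness of $\Psi$, and it depends precisely on the graphs being oriented. Without that hypothesis a pair $\{i,j\}$ could carry both orientations, and no $\{-1,0,1\}$ skew-symmetric entry could encode it.
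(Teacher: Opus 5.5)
Your proof is correct and follows the paper's route: the paper gives no argument and treats the proposition as an immediate consequence of the Skew Symmetric Graph definition. Your explicit maps $\mathbf{A}\mapsto G_\mathbf{A}$ and its inverse, together with the observation that orientedness is exactly what makes the inverse well defined on the labelled vertex set $\{1,\dots,n\}$, simply write out that immediate argument in full.
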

\Evri et al. \cite{EKV22} refer to these graphs as  skew-symmetric graphs. As in \cite{visomirski2025integrability}, for the remainder of this paper, all graphs are skew-symmetric and are called graphs. We will use a special subset of these graphs, sometimes referred to as \Bogo-Itoh graphs, or sometimes just the \Bogo graphs \cite{EKV22}. 
\begin{definition}[\Bogo Graphs] The \textit{\Bogo graph} $B(n,k)$ where $k < \frac{n}{2}$ has vertex set $V = \{1,\dots,n\}$ and edge $E$ with edge $(i,j) \in E$ if $j = 1 + (i \oplus_n l)$ for $0 \leq l < k$ and $\oplus_n$ denotes addition modulo $n$ (corrected to start at 1).
\label{Bogo Def}
\end{definition}
If the numbers in $\{1,\dots,n\}$ are arranged in a circle, then the graph $B(n,k)$ has directed edges from vertex $i$ to the next $k$ vertices, working around the circle. This is illustrated in \cref{fig:B72}.
\begin{figure}[htbp]
\centering
\includegraphics[width=0.35\textwidth]{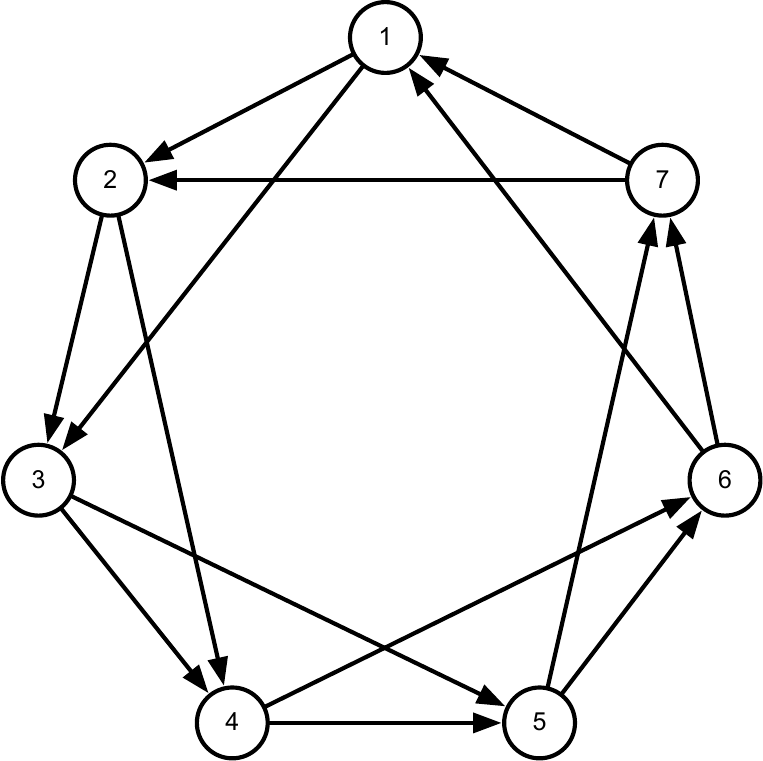}
\caption{The \Bogo graph $B(7,2)$ illustrates the definition of these graphs using geometric arrangement.}
\label{fig:B72}
\end{figure}

\begin{definition}[Balanced Tournament] A \Bogo graph $B(n,k)$ where $n = 2m+1$ (i.e., $n$ is odd) and $k = m$ is called a \textit{balanced tournament}. For simplicity, we denote the balanced tournament with $n$ vertices by $\mathrm{To}(n)$, where $n$ must be odd.
\end{definition}
It is straightforward to see that a graph is a balanced tournament if and only if there is a directed edge between any pair of vertices and for every vertex the number of outbound edges is equal to the number of inbound edges. Classic games like rock-paper-scissors are balanced tournaments. \cref{fig:BalancedTournament} (left) shows the classic rock-paper-scissors graph and is the only example of a directed cycle that is also a complete balanced tournament. \cref{fig:BalancedTournament} (right) is the graph of the game rock-paper-scissors-Spock-lizard \cite{kassRock2026}. Both are balanced tournaments.
\begin{figure}[htbp]
\centering
\includegraphics[width=0.45\textwidth]{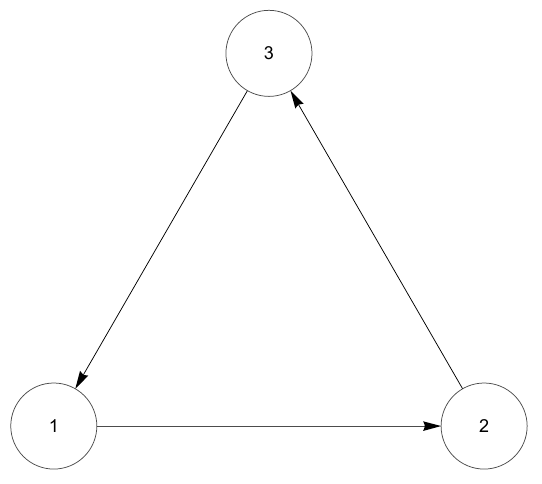}\quad
\includegraphics[width=0.45\textwidth]{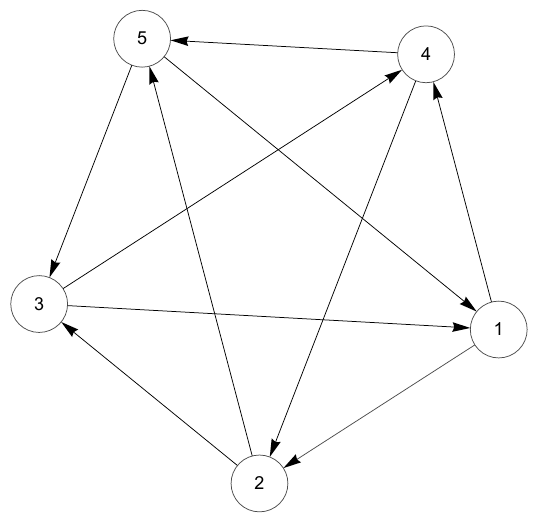}
\caption{(Left) The balanced tournament on three vertices. (Right) The balanced tournament on five vertices. }
\label{fig:BalancedTournament}
\end{figure}

We now formally define the \textit{spoking} operation on directed cycles, first observed anecdotally by Visomirski and Griffin \cite{visomirski2025integrability}.
\begin{definition}[Spoked Graphs] Let $G = (\Vin,\Ein)$ be a directed cycle on $n$ vertices with $n \geq 4$. Let $k$ be a positive integer with $k \leq \lfloor\tfrac{n}{3}\rfloor$ and let $i_1,\dots,i_k \in \{1,\dots,n\}$ be an ascending sequence with $1 \leq i_1 < i_2 < \dots i_k \leq n - 2$ and so that $i_{j+1} \geq i_j + 3$. Then the \textit{spoked cycle} on $n$ vertices with base vertices $i_1,\dots,i_k$, denoted, 
$\mathrm{Sp}(n,i_1,\dots,i_k)$ is (isomorphic to) the graph with vertex set $V = \{1,\dots,n, n+1,\dots,n+k\}$ and edge set $E$ with,
\begin{equation*}
    E = \Ein \cup \{(i_j, n+j), (i_j\oplus 1, n+j), (n+j, i_j\oplus 3), (n+j, i_j\oplus 4)\},
\end{equation*}
where $\oplus$ indicates addition modulo $n$, corrected for starting the count at $1$. This process is called \textit{spoking} a cycle and vertices $n+1,\dots,n+k$ are called \textit{spoke vertices}. We denote the resulting graph as $\mathrm{SG}(n,k)$
\label{def:SpokedCycle}
\end{definition}  
\cref{fig:Spoke} (left) shows an example of a cycle that is fully spoked. In this case, two spoke vertices have been added to the base cycle with six vertices, while \cref{fig:Spoke} (right) shows a single spoke vertex added to the seven cycle.
\begin{figure}[htbp]
\centering
\includegraphics[width=0.45\textwidth]{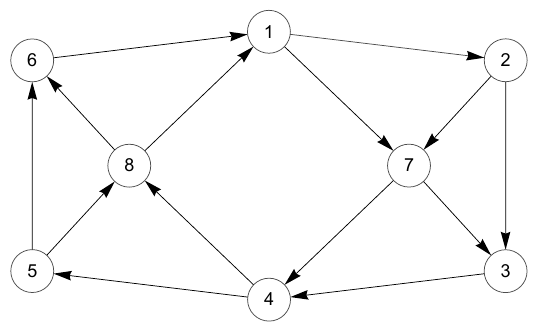} \quad \includegraphics[width=0.45\textwidth]{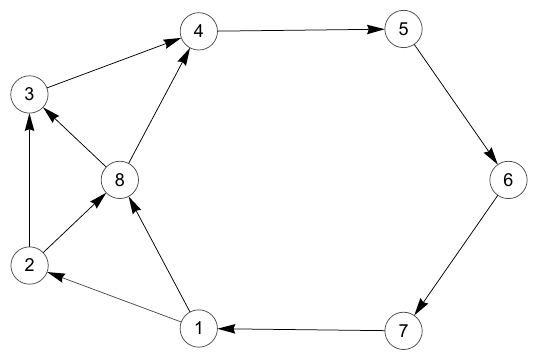}
\caption{(Left) A spoke graph with six vertices and the maximum number of spokes possible. (Right) A spoke graph with seven vertices and a single spoke vertex.}
\label{fig:Spoke}
\end{figure}
Notice from this definition if 

The second graph operation, anti-cloning, we study is a variation of the vertex cloning operation studied by \Evri et al. \cite{EKV22}. Two vertices of a graph are clones if they share identical neighbourhoods. That is, if the matrix $\mathbf{A}$ has at least two repeated rows (columns). In our case, we define a similar anti-cloning operation, but as we will see it will be relevant only when applied to balanced tournaments.

\begin{definition}[Anti-Cloning] Two vertices in a graph $G_\mathbf{A} = (V,E)$ are anti-clones if they have the same undirected neighbourhoods, but the direction of the in-bound and out-bound edges are opposite. That is, the matrix $\mathbf{A}$ has two rows (columns) corresponding to the two vertices that are additive inverses. \textit{Anti-cloning} a vertex $v \in V$ simply adds an anti-clone vertex to the graph. We denote this operation on $G_\mathbf{A}$ and vertices $v_1,\dots,v_k$ by $\mathrm{Ac}(G,\{v_1,\dots,v_k\})$. That is, in the resulting graph $v_1,\dots,v_k$ will have anti-cloned vertices.
\label{def:AntiCloning}
\end{definition}

\cref{fig:AntiCloning} shows an example of the anti-cloning of vertex 1 in a balanced tournament with five vertices. The anti-cloned vertex is vertex 6. This is one of three anomalous (integrable) graphs identified by Visomirski and Griffin \cite{visomirski2025integrability} in their taxonomy of Lotka-Volterra (replicator) systems defined by directed graphs (antisymmetric interaction matrices) with six species.
\begin{figure}
    \centering
    \includegraphics[width=0.5\linewidth]{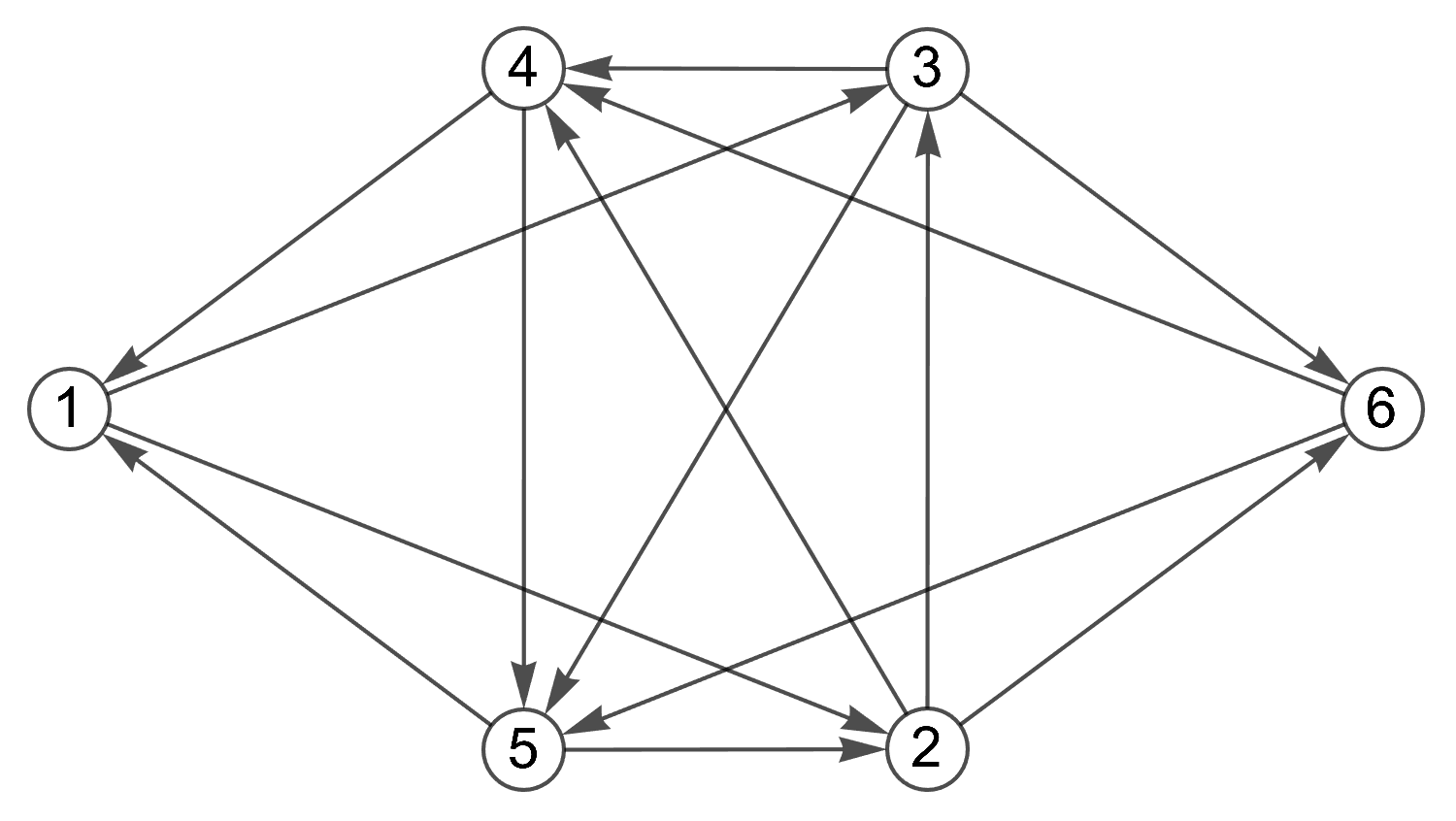}
    \caption{Seed Graph for Anti-cloning Family }
    \label{fig:AntiCloning}
\end{figure}

\subsection{Integrability of Replicator Systems}
Let $\mathbf{A} \in \mathbb{R}^{n \times n}$ be a skew-symmetric matrix and let $F,G:\mathbb{R}^n \to \mathbb{R}$ be differentiable functions. Using the quadratic bracket,
\begin{equation}
\{F,G\}_\mathbf{A} = \sum_{i < j} A_{ij}u_i u_j \left(\frac{\partial F}{\partial u_i} \frac{\partial G}{\partial u_j} - \frac{\partial G}{\partial u_i}\frac{\partial F}{\partial u_j}\right),
\label{eqn:Bracket}
\end{equation}
along with straightforward computation, \cite{G21}, shows that \cref{eqn:Replicator} is a Hamiltonian system with Hamiltonian $H_1(\mathbf{u}) = u_1 + \cdots + u_n$ and
\begin{equation*}
    \dot{u}_i = \{u_i, H_1\}_\mathbf{A} = u_i\left(\mathbf{e}_i^T\mathbf{A}\mathbf{u}\right).
\end{equation*}
Here, $H_1$ is a conserved quantity, i.e., $\dot{H}_1 = 0$, which is clear for the replicator as we automatically have $H_1 = 1$, assuming we begin with an initial condition on $\Delta_{n-1}$. Additional conserved quantities admitted by specific cases of the Lotka-Volterra (replicator) dynamics will be discussed in the sequel and are already discussed extensively in the literature \cite{I87,I08, PG23,visomirski2025integrability}. It is worth noting that the quadratic bracket given in \cref{eqn:Bracket} has an interesting history of independent rediscovery, as discussed by Visomirski and Griffin \cite{visomirski2025integrability}.


The quadratic bracket admits Poisson structure,
\begin{equation*}
    \pi_\mathbf{A} = \sum_{i < j} A_{ij}u_i u_j\frac{\partial}{\partial u_i} \wedge \frac{\partial}{\partial u_j},
\end{equation*}
which has been studied by \Evri et al. \cite{EKV22} and is discussed in \cite{L-GPV12}. Interestingly, the rank of $\pi_\mathbf{A}$ corresponds to the rank of $\mathbf{A}$. This allows us to make the following definition.

\begin{definition}[Def. 12.9 of \cite{L-GPV12}] Let $\mathbf{A} \in \mathbb{R}^{n \times n}$ be a skew-symmetric matrix with $\mathrm{rank}(\mathbf{A})$. The replicator dynamics, \cref{eqn:Replicator}, generated from $\mathbf{A}$ are \textit{integrable} if there are $s$ conserved quantities $H_1,\dots,H_s$ so that:
\begin{enumerate}
    \item $H_1,\dots,H_s$ are algebraically independent.
    \item $H_1,\dots,H_s$ commute under the action of the bracket (or are in involution). That is, $\{H_i,H_j\}_\mathbf{A} = 0$ for all $i,j$. 
    \item The following relation between matrix rank (Poisson manifold rank), number of conserved quantities, and embedding dimension holds,
    \begin{equation*}
        s + \tfrac{1}{2}\mathrm{Rank}(\mathbf{A}) = n.
    \end{equation*}
\end{enumerate}
\label{def:LAIntegrability}
\end{definition}
In light of \cref{prop:OneToOne} we say that a skew-symmetric graph $G_\mathbf{A}$ is integrable if the resulting replicator dynamics generated from the (skew-symmetric) matrix $\mathbf{A}$ are integrable. In particular, Kac and Moser \cite{KM75} and Moerbeke \cite{M74} prove the following.



\begin{theorem} Every directed cycle is integrable. Consequently, each directed cycle admits a sufficient number of conserved quantities to be integrable.\hfill\qed
\end{theorem}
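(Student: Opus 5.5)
The plan is to treat the directed cycle $C_n$ explicitly as the periodic Volterra lattice and to verify the three conditions of \cref{def:LAIntegrability} directly. Label the vertices cyclically so that $E=\{(i,i\oplus 1)\}$. Then $A_{i,i\oplus1}=-1$ and $A_{i\oplus1,i}=1$, and \cref{eqn:Replicator} becomes
\begin{equation*}
\dot u_i = u_i\,(u_{i\ominus1}-u_{i\oplus1}),
\end{equation*}
which is the Kac--van Moerbeke (periodic Volterra) lattice.

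\textbf{Step 1: the rank count.} The matrix $\mathbf{A}$ is a skew circulant with eigenvalues $\omega^{j}-\omega^{-j}$, where $\omega=e^{2\pi i/n}$. These vanish exactly when $\omega^{2j}=1$. So $\mathrm{Rank}(\mathbf{A})=n-1$ for $n$ odd and $n-2$ for $n$ even. The number of conserved quantities required is therefore $s=(n+1)/2$ for $n$ odd and $s=n/2+1$ for $n$ even.

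\textbf{Step 2: the Casimirs.} I will check that the following functions lie in the kernel of $\pi_\mathbf{A}$:
\begin{itemize}
\item for $n$ odd, $C=u_1u_2\cdots u_n$;
\item for $n$ even, $C_{\rm odd}=\prod_{i\ \rm odd}u_i$ and $C_{\rm even}=\prod_{i\ \rm even}u_i$.
\end{itemize}
Since $\{\log u_i,\log u_j\}_\mathbf{A}=A_{ij}$, a monomial $\prod u_i^{c_i}$ is a Casimir precisely when $\mathbf{c}\in\ker \mathbf{A}$. The listed exponent vectors span that kernel, so these Casimirs are automatically in involution with everything. Next I will produce the remaining $\lfloor (n-1)/2\rfloor$ non-Casimir integrals, including $H_1=\sum u_i$. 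The count then works out in both parities: $(n+1)/2 = 1+(n-1)/2$ for $n$ odd, and $n/2+1=2+(n/2-1)$ for $n$ even.

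\textbf{Step 3: the Lax integrals.} Following Kac--Moser and Moerbeke, I will use the Lax matrix $L=\sum_i (a_i E_{i,i\oplus1}+E_{i\oplus1,i})$ with $a_i=u_i$, possibly with a spectral parameter on the corner entries. I will then check $\dot L=[L,B]$, where $B$ is built from the part of $L^2$ off the main diagonal. The functions $H_k=\tfrac{1}{2k}\mathrm{tr}\,L^{2k}$ are then conserved. Combinatorially, $H_k$ is a sum over closed lattice paths, and for $k<n/2$ it has a distinguished term, the elementary symmetric sum over cyclically non-adjacent index sets, in the style of Itoh's integrals. For $H_1$ this recovers $\sum u_i$.

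\textbf{Step 4: involution and independence.} Involution $\{H_k,H_l\}_\mathbf{A}=0$ will follow from the standard $r$-matrix/Flaschka argument: the quadratic bracket pulls back to the Sklyanin-type bracket on $L$, and under that bracket spectral invariants Poisson-commute. Alternatively, one can use Itoh's direct combinatorial verification. Algebraic independence will be shown by exhibiting, at a generic point, distinct leading monomials with respect to a suitable monomial order, so that the Jacobian of $(H_1,\dots,H_{\lfloor (n-1)/2\rfloor},\text{Casimirs})$ has full rank.

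I expect Step 4 to be the main obstacle. Both the involution and, especially, the independence count must be made exact in each parity so that the total matches $s$ precisely. I would first try the leading-monomial argument, since it avoids computing spectral curves.
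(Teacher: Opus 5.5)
Steps 1 and 2 are correct: the kernel of the skew circulant, the Casimir monomials and the count of $s$ in each parity all check out. The Lax pair of Step 3 also holds; with $B=\sum_i u_iu_{i\oplus1}E_{i,i\oplus2}$ you get exactly $\dot u_i=u_i(u_{i\ominus1}-u_{i\oplus1})$. However, the combinatorial claim closing Step 3 is false, and the independence argument in Step 4 depends on it.

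For $2k<n$, a closed walk of length $2k$ cannot wind around the cycle, so every monomial of $\mathrm{tr}\,L^{2k}$ is supported on a contiguous arc. For example, for $n\ge5$ one has $\tfrac14\mathrm{tr}\,L^4=\tfrac12\sum_i u_i^2+\sum_i u_iu_{i\oplus1}$, which contains no product of non-adjacent variables at all. Moreover, for $2k<n$, $\mathrm{tr}\,L^{2k}$ is homogeneous of degree $k$ with positive coefficients, and it contains every pure power $u_j^k$ (from the walk $j\to j\oplus1\to j\to\cdots$). So in \emph{every} monomial order its leading monomial is $u_{j^*}^k$, where $u_{j^*}$ is the largest variable. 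Your trace integrals therefore have leading monomials $u_{j^*},u_{j^*}^2,\dots$, which are distinct but algebraically dependent. No choice of order rescues the leading-monomial argument for these integrals. Distinctness is not the right criterion in any case: you need the leading exponent vectors to be linearly independent, after which a nontrivial relation $P(H)=0$ would have an uncancelled top monomial.

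The missing step is to pass from power sums to the coefficients of $\det(\lambda-L)$. Expanding over permutations, for $2k<n$ only products of $k$ disjoint transpositions along edges contribute to $\lambda^{n-2k}$. This gives the coefficient $(-1)^kI_k$, where $I_k=\sum_S\prod_{i\in S}u_i$ runs over $k$-subsets $S\subset\mathbb{Z}_n$ with no two cyclically adjacent elements. These are precisely the paper's non-cycle quantities in \cref{eqn:ItohCC}. Newton's identities give $\tfrac{1}{2k}\mathrm{tr}\,L^{2k}=(-1)^{k+1}I_k+P_k(I_1,\dots,I_{k-1})$. Hence the $I_k$ generate the same algebra as your traces and inherit conservation and involution.

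In lex order with $u_1>\cdots>u_n$, the leading monomial of $I_k$ is $u_1u_3\cdots u_{2k-1}$ for $k\le\lfloor(n-1)/2\rfloor$. These exponent vectors are linearly independent together with $\mathbf{1}$ when $n$ is odd, or together with $\sum_{i\ \mathrm{odd}}\mathbf{e}_i$ and $\sum_{i\ \mathrm{even}}\mathbf{e}_i$ when $n$ is even, which closes Step 4.

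With that repair, your proof becomes the Kac--van Moerbeke/Itoh argument that the paper simply cites without proof. The paper's list $H_1,H_2,\dots,H_{\lfloor n/2\rfloor},H_n$ matches yours: for $n=2m$, the paper's non-cycle sum of length $m$ is $C_{\mathrm{odd}}+C_{\mathrm{even}}$ and $H_n=C_{\mathrm{odd}}C_{\mathrm{even}}$, which is equivalent to your two Casimirs. Your involution step remains a citation, which is no weaker than what the paper offers.
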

This result was originally phrased in terms of the dynamics alone, and the family of dynamics arising from the directed cycles is the Volterra lattice. 

More generally, the following theorem is immediate from the work of Itoh and  \Bogo \cite{I87,I08,BIY08}.
\begin{theorem} Let $G = B(n,k)$ be a \Bogo graph. Then $G$ admits a sufficient number of conserved quantities to be integrable. \hfill\qed
\label{theorem:bogoccs}
\end{theorem}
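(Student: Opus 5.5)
The plan is to use the explicit conserved quantities of Itoh and Bogoyavlenskij for $B(n,k)$ and check the three conditions of \cref{def:LAIntegrability}: count, involution, independence. Write $n = 2m+1$ or $n=2m$, and take the matrix $\mathbf{A}$ of $B(n,k)$. It is a circulant with symbol $\sum_{l=1}^{k}(\omega^{l}-\omega^{-l})$, up to sign. So its eigenvalues are $\lambda_j = 2i\sum_{l=1}^k \sin(2\pi jl/n)$, and the rank of $\mathbf{A}$ is the number of $j$ with $\lambda_j\neq 0$. The first step is to compute $s = n - \tfrac12\mathrm{Rank}(\mathbf{A})$ from this, using the Dirichlet-kernel identity $\sum_{l=1}^k \sin(l\theta) = \sin(k\theta/2)\sin((k+1)\theta/2)/\sin(\theta/2)$. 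The kernel is then spanned by the eigenvectors of the $j$ for which $n$ divides $jk$ or $j(k+1)$.

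Next I would write down the candidate functions. The Casimirs of $\pi_\mathbf{A}$ are the monomials $\prod u_i^{c_i}$ with $\mathbf{c}\in\ker \mathbf{A}$. Choosing an integral (or real) basis of the kernel gives $\dim\ker\mathbf{A} = n-\mathrm{Rank}(\mathbf{A})$ Casimirs. These lie in involution with everything.

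The remaining $\tfrac12\mathrm{Rank}(\mathbf{A})$ integrals come from Itoh's combinatorial sums. For $r\ge 1$, set
\[
H_r=\sum u_{i_1}u_{i_2}\cdots u_{i_r},
\]
where the sum runs over index tuples, read cyclically, such that each consecutive pair $i_{a+1}-i_a$ is at least $k+1$ (the "non-adjacent" condition in $B(n,k)$ generalising the Volterra lattice integrals). $H_1$ is the Hamiltonian. I would verify $\{H_r,H_1\}_\mathbf{A}=0$ and $\{H_r,H_s\}_\mathbf{A}=0$ by the standard telescoping/cancellation argument of \cite{I87,I08}, or equivalently by producing a Lax pair $\dot L=[L,B]$ with $L$ an $n\times n$ band matrix. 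That gives $H_r=\mathrm{tr}(L^{r(k+1)})$ up to normalisation, and involution follows from an $r$-matrix argument as in \cite{BIY08}. I would take the relevant $r$ from $1$ up to roughly $\lfloor n/(k+1)\rfloor$.

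The main obstacle is the count together with algebraic independence. I have to show that the number of independent polynomial integrals $H_r$ plus the number of independent Casimirs is exactly $s$. One danger is that some $H_r$ are functions of the Casimirs. Another is that when $\gcd$ conditions on $(n,k)$ or $(n,k+1)$ enlarge $\ker\mathbf{A}$, additional Casimirs appear that must be counted without double counting.

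To handle this, I would first check independence at a convenient point, for example by examining leading monomials in a lexicographic order: distinct $H_r$ have distinct degrees and leading terms, and the Casimirs are non-polynomial or higher-degree monomials. Second, I would compare the resulting count with the kernel dimension computed in the first step, splitting into cases according to $\gcd(n,k)$ and $\gcd(n,k+1)$. Where Itoh's integrals fall short, I would supplement them with the extra Casimir monomials.

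Since the paper attributes the result to \cite{I87,I08,BIY08}, in practice I would cite those computations for the involution step. I would carry out the rank/count bookkeeping explicitly only to confirm $s+\tfrac12\mathrm{Rank}(\mathbf{A})=n$.
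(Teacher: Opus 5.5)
\textbf{A family of integrals is missing, and the fallback does not close the gap.}

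Your rank bookkeeping is sound. The Dirichlet identity gives $\dim\ker\mathbf{A}=\gcd(n,k)+\gcd(n,k+1)-1=:d$, and the Casimir monomials supply exactly $d$ independent integrals. You therefore need $\tfrac12\mathrm{Rank}(\mathbf{A})=\tfrac12(n-d)$ further integrals that are not functions of Casimirs.

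Your candidate family cannot supply them. The sums over $r$-sets with all cyclic gaps $\ge k+1$ are the non-edge/non-cycle quantities, i.e.\ the $\lambda$-independent coefficients of the spectral curve $\det(L(\lambda)-\mu)$; they are not $\mathrm{tr}\,L^{r(k+1)}$, which are power sums. They exist only for $r\le\lfloor n/(k+1)\rfloor$, which in general is too few once $k\ge 2$. Concretely:
\begin{itemize}
\item For $\mathrm{To}(5)=B(5,2)$ no two vertices are at cyclic distance $\ge 3$. You get only $H_1$ and $u_1\cdots u_5$, whereas three integrals are needed. The missing one is the cubic $\sum_i u_iu_{i+1}u_{i+3}$ over the directed $3$-cycles (sub-tournaments).
\item For $\mathrm{To}(2m+1)$ in general you get $2$ integrals instead of $m+1$.
\item For $B(7,2)$ you get $H_1$, $H_2$ and $u_1\cdots u_7$, but $s=4$. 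The quartic over directed $4$-cycles is missing.
\item For $B(9,2)$ your $H_3=u_1u_4u_7+u_2u_5u_8+u_3u_6u_9$ is itself a function of the Casimirs, so you are again one short; the paper's quintic $H_5$ is missing.
\end{itemize}

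Your fallback, supplementing with extra Casimir monomials when $\gcd$ conditions enlarge the kernel, cannot close this. Every Casimir is already counted in $d$, and the shortfall sits in the $\tfrac12\mathrm{Rank}(\mathbf{A})$ part, which only non-Casimir functions can fill. $B(9,2)$ is exactly such an enlarged-kernel case and is still short.

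What you need is a second source of integrals. In the paper's language these are sums of Casimirs of sub-tournaments or maximal sub-Bogoyavlenskij subgraphs: the set $\mathcal{C}_1$ of Conjecture~\ref{conj:bogoccsconjecture}, and Itoh's tournament integrals. In Lax language they are the spectral-curve coefficients at nonzero powers of $\lambda$, coming from closed walks (steps $+1$ and $-k$) that wind around $\mathbb{Z}_n$. For example, in $B(7,2)$, four $-2$ steps and one $+1$ step produce $u_1u_2u_4u_6+\cdots$.

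Even with these in hand, the substantive step is showing, for all $(n,k)$, that $\tfrac12(n-d)$ of them are independent modulo the Casimirs and pairwise in involution. The paper does not carry this out. It simply cites Itoh and Bogoyavlenskij et al., and itself remarks that integrability is established only for cycles and balanced tournaments. So the explicit count you planned to verify yourself is precisely where your argument, as written, fails.
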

It is worth noting that the integrability of the \Bogo graphs is only known for directed cycles \cite{M74,KM75} and balanced tournaments \cite{VS93,I87,B91}. However, it is conjectured that this family of graphs is integrable and specific members have been shown to be integrable for the purposes of taxonomy (see e.g., \cite{visomirski2025integrability}).

Following Visomirski and Griffin \cite{visomirski2025integrability}, who in turn follow Itoh \cite{I87,I08} we will use combinatorial properties of the graphs to construct conserved quantities. Recall, the graph $B(n,1)$ is the directed cycle with $n$ vertices.  
In the dynamical system generated from a directed cycle, there are always at least two conserved quantities,
\begin{equation*}
H_1 = \sum_{i} u_i \qquad \text{and} \qquad H_{n} = \prod_{i} u_i.
\end{equation*}
Note the interaction matrix for the directed $n$ cycle has rank $n - 2$ when $n$ is even and rank $n - 1$ when $n$ is odd. Consequently, when $n$ is odd, there must be $\left\lceil\tfrac{n}{2}\right\rceil = \left\lfloor\tfrac{n}{2}\right\rfloor + 1$ conserved quantities and when $n$ is even, there must be $\tfrac{n}{2} + 1 = \left\lfloor\frac{n}{2}\right\rfloor + 1$ conserved quantities. Put more simply, if $n = 2m$ or $n = 2m+1$, there are $m+1$ conserved quantities.
 
Visomirski and Griffin \cite{visomirski2025integrability} define a \textit{non-edge} as a pair $(i,j)$ with $i < j$ so that neither $(i,j)$ nor $(j,i)$ is an element of $E$. Likewise, a \textit{non-cycle} is a sequence $(i_1,\dots,i_r)$ so that, $i_j < i_{j+1}$, and for all pairs $(i_j, i_k)$, the pair $(i_j, i_k)$ is a non-edge and the pair $(i_r,i_1)$ is also a non-edge. From this, it is clear that, $(i_1,\dots,i_r)$ forms a cycle in the graph complement of $B(n,1)$. If we think of non-edges as non-cycles of length $2$ and denote the set of all non-cycles of length $l$ as $\overline{\mathcal{C}}(l)$, then direct enumeration shows that the graph $B(n,1)$ possesses non-empty sets of non-cycles $\overline{\mathcal{C}}(k)$ for $2 \leq k \leq \left\lfloor\tfrac{n}{2}\right\rfloor$. Using these non-cycles, we define the quantities, 
\begin{equation}
    H_k = \sum_{c \in \overline{C}(k)} \prod_{j \in c} u_j.
    \label{eqn:ItohCC}
\end{equation}
The following is a restatement of the integrability of the Volterra lattice and follows from Itoh \cite{I08}.
\begin{theorem} For $k\in \left\{1,2,\dots,\left\lfloor{\tfrac{n}{2}}\right\rfloor,n\right\}$, the quantity $H_k$ is conserved in the replicator dynamics generated by $B(n,1)$. Moreover, the conserved quantities $H_k$ are (algebraically) independent and in involution with respect to the bracket given by \cref{eqn:Bracket}.
\hfill\qed
\end{theorem}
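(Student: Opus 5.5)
The natural route is direct combinatorial computation with monomials, using the fact that the bracket \cref{eqn:Bracket} acts very simply on them. For exponent vectors $\mathbf{a},\mathbf{b}$ one has $\{\mathbf{u}^{\mathbf{a}},\mathbf{u}^{\mathbf{b}}\}_\mathbf{A} = (\mathbf{a}^T\mathbf{A}\mathbf{b})\,\mathbf{u}^{\mathbf{a}+\mathbf{b}}$.

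For $B(n,1)$, the non-cycles of length $k$ are exactly the $k$-subsets $S\subset\mathbb{Z}_n$ containing no two cyclically consecutive vertices, i.e.\ independent sets of the $n$-cycle. Hence $H_k=\sum_{S}\mathbf{u}^{S}$ over such sets. With edges $(i,i\oplus 1)$ we have $(\mathbf{A}\mathbf{u})_i = u_{i-1}-u_{i+1}$, where $A_{i,i\pm1}=\mp1$.

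\textbf{Step 1: conservation.} Compute
\begin{equation*}
\dot H_k=\sum_{S}\mathbf{u}^{S}\sum_{i\in S}(u_{i-1}-u_{i+1}).
\end{equation*}
Every monomial produced has the form $\mathbf{u}^{T}$ with $T=S\cup\{j\}$ and $j$ adjacent to some $i\in S$. I group the terms by $T$ and build a sign-reversing pairing.

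If $T$ contains exactly one adjacent pair $\{a,a+1\}$, the monomial arises twice. One occurrence comes from $S=T\setminus\{a+1\}$ with $i=a$, contributing $-1$. The other comes from $S=T\setminus\{a\}$ with $i=a+1$, contributing $+1$.

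If $T$ contains a run $a,a+1,a+2$, the monomial arises only from $S=T\setminus\{a+1\}$. It is hit once from $i=a$, contributing $-1$, and once from $i=a+2$, contributing $+1$.

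Longer runs cannot occur, since $S$ is independent and only one vertex is added. So everything cancels and $\dot H_k=0$.

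For $H_n=\prod u_i$ we get $\dot H_n=H_n\sum_i(\mathbf{A}\mathbf{u})_i=H_n\,\mathbf{1}^T\mathbf{A}\mathbf{u}=0$, because every vertex has in-degree equal to out-degree. $H_1$ is the Hamiltonian itself.

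\textbf{Step 2: involution.} $H_n$ is a Casimir-type quantity: $\{u_j,H_n\}=H_n(\mathbf{A}\mathbf{1})_j=0$. Hence only $\{H_k,H_l\}$ with $k,l\le\lfloor n/2\rfloor$ remains, and
\begin{equation*}
\{H_k,H_l\}=\sum_{S,T}(\mathbf{1}_S^T\mathbf{A}\mathbf{1}_T)\,\mathbf{u}^{S}\mathbf{u}^{T},
\end{equation*}
where $\mathbf{1}_S^T\mathbf{A}\mathbf{1}_T$ counts signed arcs between $S$ and $T$.

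I would first try an involution on pairs $(S,T)$ that preserves the multiset $S\uplus T$ and reverses the sign. The idea is to decompose $S\cup T$ into maximal runs of the cycle, alternating between $S$- and $T$-vertices, and swap the $S$/$T$ labels on a canonically chosen run (e.g.\ the first run, in cyclic order, that carries nonzero arc weight). This keeps $|S|,|T|$ fixed only if that run has equal numbers of $S$ and $T$ vertices. Runs of odd length need a separate argument, where I expect reflecting the run (reversing orientation) to negate the weight.

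This step is the main obstacle. If the direct bijection becomes unmanageable, my fallback is to realise the generating function $\sum_k(-\lambda)^kH_k$ (together with $H_n$) as the trace of a product of $2\times 2$ transfer matrices $L_i(\lambda)=\begin{pmatrix}1&-\lambda u_i\\1&0\end{pmatrix}$. Independent sets are exactly what such a transfer-matrix trace enumerates. I would then check that the quadratic bracket induces a constant $r$-matrix (Sklyanin) relation on the $L_i$, so that the traces for different $\lambda$ commute; this is Itoh's approach in \cite{I08}.

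\textbf{Step 3: independence.} It suffices to exhibit one point where $dH_1,\dots,dH_{\lfloor n/2\rfloor},dH_n$ are linearly independent. I would use the triangular structure. Ordering vertices so that $H_k$ contains the monomial $u_1u_3\cdots u_{2k-1}$, evaluate the Jacobian at a point where $u_{2j}$ is small. There the leading contribution of $\partial H_k/\partial u_{2k-1}$ is a nonzero product, while $\partial H_m/\partial u_{2k-1}$ for $m<k$ is dominated by lower-order terms. This gives a nonsingular minor.

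Finally, the count $\lfloor n/2\rfloor+1$ matches $n-\tfrac12\mathrm{Rank}(\mathbf{A})$ for both parities, as noted before the statement.
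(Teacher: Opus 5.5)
\textbf{There is a genuine gap in Step 2, which you flag yourself.} Your Step 1 is correct. It is the rigorous form of the ``flow-counting'' cancellation the paper gestures at after \cref{eqn:CycleDeriv}. The paper gives no argument of its own for this theorem and defers to Itoh \cite{I08}.

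Neither of your routes for Step 2 closes it as written.
\begin{itemize}
\item Swapping the $S/T$ labels on a single canonically chosen run negates only that run's share of $w(S,T)=\mathbf{1}_S^T\mathbf{A}\mathbf{1}_T$, not $w$ itself. So the map is not sign-reversing once two runs carry weight: weights $+1,+1$ become $-1,+1$.
\item The plan to ``reflect'' odd runs addresses a non-problem, since odd runs carry zero weight (see below).
\item The transfer-matrix fallback fails structurally. Each $L_i(\lambda)$ depends on $u_i$ alone, so $\{L_i\otimes L_i\}=0$. Yet $\{u_i,u_{i+1}\}_\mathbf{A}=-u_iu_{i+1}\neq 0$, so neighbouring $L_i$ do not Poisson-commute. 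Hence there is no ultralocal Sklyanin relation for these $L_i$. You would need a non-ultralocal ($r$/$s$-matrix) structure and a proof that the traces still commute, and none of this is supplied.
\end{itemize}

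\textbf{How to close the gap directly.} Group the terms of $\{H_k,H_l\}$ by the monomial $\mathbf{u}^S\mathbf{u}^T$.
\begin{itemize}
\item Since $S$ and $T$ are both independent, each vertex of $S\cap T$ is isolated in $S\cup T$.
\item Every maximal run of $S\cup T$ of length $r\ge 2$ therefore alternates strictly between $S$ and $T$.
\item Only arcs inside a run contribute to $w$, and their signs alternate along the run. So an odd run contributes $0$. An even run contributes $\pm1$ according to its first label, and it contains exactly $r/2$ vertices of each label.
\item Swap the labels on \emph{all} even runs simultaneously. This is an involution that preserves independence, $|S|$, $|T|$ and the monomial, and it sends $w\mapsto -w$.
\item Its fixed points have no even runs, hence $w=0$. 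So every coefficient cancels.
\item A closed run, with $S\cup T$ the whole cycle, occurs only for $k=l=n/2$, where the bracket vanishes trivially.
\end{itemize}

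\textbf{Step 3 also needs repair.} The claimed triangularity is false. Near a point where the odd coordinates are comparable, $\partial H_j/\partial u_{2k-1}$ is $O(1)$ for every $j\le k$; for example $\partial H_1/\partial u_{2k-1}=1$. If the odd coordinates are equal, the odd-column minor is singular. A working version is as follows.
\begin{itemize}
\item Take $u_{2i-1}=x_i$ distinct for $i\le m=\lfloor n/2\rfloor$, and set every other coordinate to $\epsilon$.
\item Then $\partial H_k/\partial u_{2i-1}=\partial e_k/\partial x_i+O(\epsilon)$, where $e_k$ is the $k$-th elementary symmetric polynomial. This minor has determinant $\pm\prod_{i<j}(x_i-x_j)+O(\epsilon)\neq 0$.
\item Adjoin the column $u_2$, and divide the $H_n$ row by $\partial H_n/\partial u_2$. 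That row becomes $(O(\epsilon),\dots,O(\epsilon),1)$.
\item The resulting $(m+1)\times(m+1)$ minor is nonsingular for small $\epsilon$.
\end{itemize}
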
 
Thus, in a directed $n$-cycle with $n=2m$ or $n = 2m+1$ we see there are $m-1$ conserved quantities generated by non-cycles and two additional conserved quantities, $H_1$ and $H_n$ making $m+1$ conserved quantities in all. 

Paik and Griffin \cite{PG23} and Itoh \cite{I87} note that the conserved quantities for the balanced tournament graph $\mathrm{To}(2m+1)$ are sums of the Casimirs of the Poisson structure corresponding to balanced sub-tournaments, or maximal \Bogo subgraphs. Formally, for $k \in \{1,\dots,m\}$, let $\mathcal{T}(2k+1)$ be the set of vertex subsets of form $\{i_1,\dots,i_{2k+1}\}$ in $\mathrm{To}(2m+1)$ so that the subgraph generated from $\{i_1,\dots,i_{2k+1}\}$ is a  balanced sub-tournament of $\mathrm{To}(2m+1)$. Then the conserved quantities of the dynamics generated by $\mathrm{To}(2m+1)$ are,
\begin{equation}
    H_{2k+1} = \sum_{t \in \mathcal{T}(2k+1)}\prod_{j\in t}u_j.
    \label{eqn:ItohCC2}
\end{equation}
The following is a restatement of the integrability of the dynamics generated by the balanced tournaments (see \cite{I87,PG23}).
\begin{theorem} For $k\in \left\{1,2,\dots,m+1\right\}$, the quantity $H_k$ is conserved in the replicator dynamics generated by $\mathrm{To}(2m+1)$. Moreover, the conserved quantities $H_k$ are (algebraically) independent, in involution with respect to the bracket given by \cref{eqn:Bracket} and composed of sums of Casimirs of the dynamics generated by the balanced sub-tournaments of $\mathrm{To}(2m+1)$.\hfill\qed
\label{thm:Tournament}
\end{theorem}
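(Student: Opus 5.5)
The statement has three parts: conservation, involution, and independence together with the count. I would prove conservation and involution by direct computation with the bracket \cref{eqn:Bracket}, organised around the monomials $P_t=\prod_{j\in t}u_j$ for $t\in\mathcal{T}(2k+1)$. Here $H_1=\sum_i u_i$ is the case $k=0$ (single vertices) and $H_{2m+1}=\prod_i u_i$ is the case $k=m$. The indexing in the statement should be read as $H_1,H_3,\dots,H_{2m+1}$, which is $m+1$ quantities.

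\textbf{Casimir step.} For a fixed $t$, the quadratic bracket gives $\{u_i,P_t\}_\mathbf{A}=u_iP_t\sum_{j\in t}A_{ij}$. When $i\in t$ this sum vanishes, because in a balanced sub-tournament every vertex has equal in- and out-degree. So $P_t$ is a Casimir of the bracket restricted to the coordinates in $t$, which is the "sum of Casimirs" claim.

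\textbf{Conservation.} It then follows that
\begin{equation*}
\dot H_{2k+1}=\sum_{t\in\mathcal{T}(2k+1)}P_t\sum_{l\notin t}c_{t,l}\,u_l,\qquad c_{t,l}=\sum_{j\in t}A_{jl}.
\end{equation*}
I would group the terms by the $(2k+2)$-set $S=t\cup\{l\}$. Conservation then reduces to the combinatorial identity
\begin{equation*}
\sum_{l\in S,\ S\setminus\{l\}\in\mathcal{T}(2k+1)}c_{S\setminus\{l\},l}=0 \quad\text{for every } S.
\end{equation*}
To prove it I would first characterise $\mathcal{T}(2k+1)$ in $\mathrm{To}(2m+1)$. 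Write $t=\{i_1<\dots<i_{2k+1}\}$ in cyclic order. I expect $t$ to be balanced if and only if, for every $a$, the cyclic distance from $i_a$ to $i_{a+k}$ is at most $m$ and the distance from $i_a$ to $i_{a+k+1}$ exceeds $m$. In words, the induced subgraph is isomorphic to $B(2k+1,k)$ in its inherited cyclic order. This uses only that vertex $i$ beats the next $m$ vertices on the circle.

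With this description, $c_{S\setminus\{l\},l}$ counts the points of $S$ in the half-circle "behind" $l$ minus those in front, so it takes values in $\{\pm1,\pm3,\dots\}$. I would then build a sign-reversing pairing on the admissible $l\in S$. For example, pair $l$ with the point of $S$ that lies diametrically across the gap structure, and check that removing either one yields a balanced set with opposite defects. This identity is the main obstacle. If a clean bijection does not emerge, I would fall back to Itoh's Lax representation for the Bogoyavlenskij lattice \cite{I87,BIY08}: the $H_{2k+1}$ are the coefficients of its characteristic polynomial, and that gives conservation directly.

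\textbf{Involution.} For $t\in\mathcal{T}(2k+1)$ and $s\in\mathcal{T}(2r+1)$ one has $\{P_t,P_s\}_\mathbf{A}=P_tP_s\sum_{i\in t,j\in s}A_{ij}$. Contributions from $i,j$ both in $t\cap s$ cancel by skew-symmetry. Contributions with $j\in s\setminus t$ summed over $i\in t$ reduce to the same column sums $c_{t,j}$ as above, and likewise with the roles of $t$ and $s$ swapped. Grouping by the multiset $t\uplus s$ and reusing the pairing should make $\{H_{2k+1},H_{2r+1}\}_\mathbf{A}$ vanish; the Lax/$r$-matrix fallback applies here as well.

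\textbf{Count and independence.} The circulant skew matrix of $\mathrm{To}(2m+1)$ has eigenvalues $\sum_{l=1}^m(\omega^{l}-\omega^{-l})$ for $n$-th roots of unity $\omega$. These vanish only at $\omega=1$, so the rank is $2m$. Then $s+\tfrac12\mathrm{Rank}(\mathbf{A})=(m+1)+m=2m+1=n$, as \cref{def:LAIntegrability} requires. The $H_{2k+1}$ are homogeneous of the distinct degrees $1,3,\dots,2m+1$. For independence I would exhibit a point where the Jacobian has full rank $m+1$. A natural choice is a one-parameter family such as $u=(x_1,\dots,x_{m+1},\epsilon,\dots,\epsilon)$ with $\epsilon\to0$: each $H_{2k+1}$ then has a distinct lowest-order term in $\epsilon$, and the Jacobian minor becomes triangular.
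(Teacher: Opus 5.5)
The paper gives no proof of this theorem: it is stated without argument and attributed to Itoh \cite{I87} and Paik--Griffin \cite{PG23}. Your self-contained route therefore has to carry the whole weight.

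Several pieces are correct:
\begin{itemize}
\item the Casimir step;
\item the reduction of $\dot H_{2k+1}=0$ to the per-set identity $\sum_{l} c_{S\setminus\{l\},l}=0$;
\item your distance criterion for balanced subsets;
\item the circulant rank computation giving $s=m+1$.
\end{itemize}

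However, the identity you call the main obstacle is left open, and the pairing you suggest is not the right one. Here is what works. Put the vertices on a circle. Since $n$ is odd, no two are antipodal, and your criterion is equivalent to this: the points of $t$ and the points of its half-turn rotation $t+\tfrac12$ alternate around the circle. Now take a $(2k+2)$-set $S$ and go around through the $4k+4$ points of $S\cup(S+\tfrac12)$. By half-turn symmetry, the number of places where two points of $S$ are consecutive equals the number where two antipodes are consecutive. Moreover, $S\setminus\{l\}$ is balanced exactly when there is a single consecutive pair $\{p,q\}\subset S$ and $l\in\{p,q\}$. So the admissible $l$ are either absent or form one pair of \emph{neighbours} (no antipode of $S$ between them), not diametrically opposite points. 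Balancedness of $S\setminus\{p\}$, read off at $q$, then forces $c_{S\setminus\{p\},p}$ and $c_{S\setminus\{q\},q}$ to be $+1$ and $-1$ in some order. That closes conservation.

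The genuine gap is involution. Your formula $\{P_t,P_s\}_\mathbf{A}=P_tP_s\sum_{i\in t,\,j\in s}A_{ij}$ is correct, but ``grouping by $t\uplus s$ and reusing the pairing'' is not an argument. For a fixed monomial you must match up \emph{all} decompositions of the multiset into a balanced $(2k+1)$-set and a balanced $(2r+1)$-set, weighted by $\sum_{j\in s\setminus t}c_{t,j}$. The single-vertex-removal pairing above covers only $r=0$, i.e.\ $\{H_1,H_{2k+1}\}_\mathbf{A}=0$, which is conservation again. The general case needs a genuinely different combinatorial cancellation, and that is what Itoh's separate paper \cite{I08} supplies.

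Your fallback does not close this either. A Lax pair yields conserved quantities, not their involution under the specific quadratic bracket \cref{eqn:Bracket}; for that you would need an $r$-matrix statement for this bracket. Identifying the combinatorial $H_{2k+1}$ with characteristic-polynomial coefficients is also an unproved claim. As written, involution rests on citation, which is all the paper itself offers.

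Finally, independence is only asserted. Distinct lowest orders in $\epsilon$ do not by themselves make a Jacobian minor triangular. You need to name the columns and check that the leading determinant is nonzero.
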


\Evri et al. \cite{EKV22} note that the Casimirs of the Poisson algebra corresponding to $\pi_\mathbf{A}$ can be read from the eigenvectors of $\mathbf{A}$ that correspond to the zero eigenvalue. To see this, note that if $\bm{\alpha} = \langle{\alpha_1,\dots,\alpha_n}\rangle$ is an eigenvector of $\mathbf{A}$ with eigenvalue $0$, then,
\begin{equation}
    \frac{d}{dt}\left(u_1^{\alpha_1} \cdots u_n^{\alpha_n}\right) = u_1^{\alpha_1} \cdots u_n^{\alpha_n}\left(\bm{\alpha}^T\mathbf{A}\mathbf{u}\right) = 0,
\label{eqn:TimeDerivativeOfProduct}
\end{equation}
and thus $x_1^{\alpha_1} \cdots x_n^{\alpha_n}$ is a conserved quantity of the system. In the case of $\Bogo$ graphs, these highest degree Casimirs have $\alpha_i = 1$ for $1\leq i \leq n$. 

For the graphs we consider, the resulting dynamics will have conserved quantities, all in a form that generalizes \cref{eqn:ItohCC,eqn:ItohCC2}. Consequently,  Visomirski and Griffin \cite{visomirski2025integrability} introduced the following definition in light of Itoh's contributions to the combinatorial connection between integrable Lotka-Volterra systems and graphs.
\begin{definition}[Itoh-Style Conserved Quantities] An oriented directed graph $G$ admits Itoh-style conserved quantities if it has a set of algebraically independent conserved quantities that can be written in the form,
\begin{equation}
    H = \sum_{j\in \mathcal{J}}\prod_{i\in\mathcal{I}(j)} u_i,
    \label{eqn:ItohStyle}
\end{equation}
for appropriate index sets $\mathcal{I}(j)$ and $\mathcal{J}$.
\end{definition}
In general, we expect the index set $\mathcal{J}$ to be defined combinatorially from the graph structure itself. By way of example, consider the tournament graph $\mathrm{To}(n)$ with $n = 2m+1$. Its conserved quantities can be derived from its sub-tournaments as shown by Vesselov-Shabbat \cite{VS93}, Itoh \cite{I87} and \Bogo \cite{B88,bogoyavlensky1988five}. In this case, let $\mathcal{T}(k)$ be the set of (vertex sets of) sub-tournaments of $\mathrm{To}(n)$ with $k \leq n$ vertices. Here, we assume sub-tournaments are modulo graph isomorphism. Then, 
\begin{equation*}
    H_k = \sum_{t \in \mathcal{T}(k)}\prod_{i\in t} u_i,
\end{equation*}
is a conserved quantity where $t$ is the set of vertices in the sub-tournament. 


\subsection{Graph Derivative Operator}
Griffin and Visomirski note that each graph defines a graph derivative on functions of trajectories $\mathbf{u}(t)$. Let $G_\mathbf{A}$ be a graph generated by (skew-symmetric) matrix $\mathbf{A}$. Then, 
\begin{equation*}
    D_{G_\mathbf{A}}[f] = \sum_i u_i\frac{\partial f}{\partial u_i} \mathbf{e}_i^T\mathbf{A}\mathbf{u}.
\end{equation*}
By way of example, for a directed cyclic graph $B(n,1)$ and a product $T = u_{i_1} \cdots u_{i_k}$, we have\footnote{Note, we have corrected a sign error in \cite{visomirski2025integrability}.},
\begin{equation}
    D_{B(n,1)}[T] = T \cdot \left[(u_{i_1-1} - u_{i_1+1}) + (u_{i_2-1} - u_{i_2+1}) + \cdots + (u_{i_k-1} - u_{i_k+1}) \right].
    \label{eqn:CycleDeriv}
\end{equation}
Addition in the index terms is done modulo $n$ with adjustments made so that indices run from $1$ to $n$. We now use the contrivance of an imaginary flow from Visomirski and Griffin \cite{visomirski2025integrability}. The sum on the right-hand-side performs in/out flow counting with respect to the non-cycle; that is, we imagine a unit of flow going into and out of each vertex $i_j$. It is now straightforward to see why Itoh-style conserved quantities are conserved. The symmetry in the directed cycle ensures that for non-cycles, all flow counts cancel in derivatives of sums of non-edges of the same size.   

\section{Spoke Graphs}\label{sec:Spoke}


\begin{lemma} The spoked graph $\mathrm{SG}(n,k)$ has interaction matrix with rank $n - 1$ if $n$ is odd and $n - 2$ if $n$ is even.
\end{lemma}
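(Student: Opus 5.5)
The plan is a change of basis (a congruence) under which the interaction matrix of $\mathrm{SG}(n,k)$ becomes block diagonal. One block is the $n\times n$ directed-cycle matrix; the other is a $k\times k$ block on the spoke vertices that we want to show vanishes. Since rank is invariant under congruence $\mathbf{A}\mapsto \mathbf{P}^T\mathbf{A}\mathbf{P}$, the lemma then reduces to the known rank of the cycle: $n-1$ for $n$ odd and $n-2$ for $n$ even, as recalled before \cref{eqn:ItohCC}.

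\textbf{Step 1 (rows in cycle coordinates).} Write $\mathbf{A}$ as a bilinear form $A(\mathbf{x},\mathbf{y})=\mathbf{x}^T\mathbf{A}\mathbf{y}$. For a cycle vertex $c$, the edge $(c,c\oplus1)$ gives row $\mathbf{e}_{c-1}-\mathbf{e}_{c+1}$ on the cycle columns. For the spoke vertex $s=n+j$ with base $i=i_j$, \cref{def:SpokedCycle} gives row $\mathbf{e}_i+\mathbf{e}_{i+1}-\mathbf{e}_{i+3}-\mathbf{e}_{i+4}$ on the cycle columns, and zeros on all spoke columns. A direct telescoping computation gives
\begin{equation*}
(\mathbf{e}_{i}-\mathbf{e}_{i+2})+(\mathbf{e}_{i+1}-\mathbf{e}_{i+3})+(\mathbf{e}_{i+2}-\mathbf{e}_{i+4})=\mathbf{e}_i+\mathbf{e}_{i+1}-\mathbf{e}_{i+3}-\mathbf{e}_{i+4}.
\end{equation*}
So on the cycle columns, the spoke row equals the sum of the cycle rows $i+1$, $i+2$, $i+3$.

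\textbf{Step 2 (the congruence).} Set $\mathbf{w}_s=\mathbf{e}_{i+1}+\mathbf{e}_{i+2}+\mathbf{e}_{i+3}$ and $\mathbf{f}_s=\mathbf{e}_s-\mathbf{w}_s$. Take the new basis $\{\mathbf{e}_1,\dots,\mathbf{e}_n,\mathbf{f}_{n+1},\dots,\mathbf{f}_{n+k}\}$; it is unitriangular relative to the standard basis, so $\mathbf{P}$ is invertible. By Step 1, $A(\mathbf{f}_s,\mathbf{e}_c)=0$ for every cycle vertex $c$. The cycle--cycle block is unchanged, so $\mathbf{P}^T\mathbf{A}\mathbf{P}=\mathbf{C}_n\oplus\mathbf{B}$, where $\mathbf{C}_n$ is the directed-cycle matrix and $B_{ss'}=A(\mathbf{f}_s,\mathbf{f}_{s'})$.

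\textbf{Step 3 (main obstacle: $\mathbf{B}=0$).} Using $A(\mathbf{f}_s,\mathbf{w}_{s'})=0$ (because $\mathbf{f}_s$ pairs to zero with every cycle vector) together with $A(\mathbf{e}_s,\mathbf{e}_{s'})=0$, we get
\begin{equation*}
B_{ss'}=-A(\mathbf{e}_s,\mathbf{w}_{s'}).
\end{equation*}
This is minus the sum of the entries of the spoke row $s$ at positions $i'+1$, $i'+2$, $i'+3$, where $i'=i_{j'}$ is the base of $s'$. The diagonal case $s=s'$ gives $1+0-1=0$ automatically. For $s\neq s'$ one must check that the window $\{i'+1,i'+2,i'+3\}$ meets the support $\{i,i+1,i+3,i+4\}$ with total weight zero. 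This is where the spacing hypotheses $i_{j+1}\ge i_j+3$, $i_k\le n-2$ and $k\le\lfloor n/3\rfloor$ must be used, including the wrap-around modulo $n$; it is the step I expect to be delicate.

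If some configuration allowed by the definition gives a nonzero entry, I would use the freedom in choosing $\mathbf{w}_s$ instead. Any vector whose $\mathbf{A}$-image agrees with the spoke row works, since $\mathbf{w}_s$ is determined only up to $\ker \mathbf{C}_n$, which is spanned by $(1,\dots,1)$ and, for even $n$, also by the alternating vector. With that freedom I would choose the $\mathbf{w}_s$ so that the pairings $A(\mathbf{e}_s,\mathbf{w}_{s'})$ cancel. Once $\mathbf{B}=0$ is established, $\mathrm{rank}\,\mathbf{A}=\mathrm{rank}\,\mathbf{C}_n$, which gives the lemma.
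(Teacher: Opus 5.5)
\textbf{Gap in Step 3.} Step 3 carries the whole content of the lemma once $k\ge 2$. For $k=1$, $\mathbf{B}$ is a $1\times 1$ skew matrix and vanishes automatically. You leave this step unverified, and with the adjacency you read off the printed definition it is false. Under that reading, spoke $s$ is attached to $i,i+1,i+3,i+4$, so $\mathbf{w}_s$ sums three cycle rows.

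Take $n=7$, $k=2$, $(i_1,i_2)=(1,4)$, which meets every hypothesis. Row $8$ is $\mathbf{e}_1+\mathbf{e}_2-\mathbf{e}_4-\mathbf{e}_5$ and $\mathbf{w}_9=\mathbf{e}_5+\mathbf{e}_6+\mathbf{e}_7$. Hence $B_{89}=-A(\mathbf{e}_8,\mathbf{w}_9)=1$, and $\operatorname{rank}\mathbf{A}=6+2=8\neq n-1$. More generally, whenever $i'=i+3$ and $n\ge 7$, the window $\{i+4,i+5,i+6\}$ picks up the $-1$ at $i+4$ with nothing to cancel it.

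Your fallback cannot rescue this. For $\mathbf{z}\in\ker\mathbf{C}_n$, Step 1 gives $A(\mathbf{e}_s,\mathbf{z})=\mathbf{w}_s^T\mathbf{C}_n\mathbf{z}=0$. So $\mathbf{B}$ does not depend on how you choose the $\mathbf{w}_{s'}$ modulo $\ker\mathbf{C}_n$. This must be so, since $\operatorname{rank}\mathbf{B}=\operatorname{rank}\mathbf{A}-\operatorname{rank}\mathbf{C}_n$ is fixed by $\mathbf{A}$.

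\textbf{The missing ingredient.} You need the adjacency the paper's proof actually uses. Each spoke is attached to four \emph{consecutive} cycle vertices $i,i+1,i+2,i+3$: it receives edges from $i,i+1$ and sends edges to $i+2,i+3$. This is what the Casimir $u_{n+j}/(u_{i+1}u_{i+2})$ requires, and it matches the paper's $\mathrm{Sp}(5,1)$ example (spoke $6$ adjacent to $1,2,3,4$, Casimir $u_6/(u_2u_3)$). The printed $i_j\oplus 3,\,i_j\oplus 4$ is inconsistent with this.

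With that adjacency, the spoke row is the sum of just two cycle rows, so $\mathbf{w}_s=\mathbf{e}_{i+1}+\mathbf{e}_{i+2}$. Then $B_{ss'}$ is minus the sum of the entries of row $s$ at the two middle vertices $i'+1,i'+2$ of spoke $s'$. The conditions $i_{j+1}\ge i_j+3$ and $i_k\le n-2$ make distinct spokes overlap only in endpoints, including the wrap-around pair. So each middle vertex is adjacent to exactly one spoke, and $\mathbf{B}=0$.

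This is exactly the paper's observation that vertices $i+1$ and $i+2$ interact with only one spoke. The paper's identity $\mathbf{c}_{n+j}=\mathbf{c}_{i+1}+\mathbf{c}_{i+2}$ says precisely that your $\mathbf{f}_s$ lies in $\ker\mathbf{A}$. With this correction your congruence argument coincides with the paper's. It also has the advantage of making two things explicit that the paper leaves implicit: the vanishing of the spoke--spoke block, and the lower bound $\operatorname{rank}\mathbf{A}\ge\operatorname{rank}\mathbf{C}_n$.
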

\begin{proof} Note that the rank of the interaction matrix corresponding to the directed cycle with $n$ vertices has rank $n-1$ if $n$ is odd and $n - 2$ if $n$ is even.

Assume that a spoked vertex is connected to vertices $i, i+1, i+2$ and $i+3$. When compared to the interaction matrix of the directed $n$-cycle, the interaction matrix of $\mathrm{SG}(n,k)$ has $k$ additional rows corresponding to the new $k$ additional vertices. Without loss of generality, each spoke vertex generates a column with form,
\begin{equation*}
    \mathbf{c}_{n+j} = \begin{bmatrix}
    0\\
    \vdots\\
    1\\
    1\\
    -1\\
    -1 \\
    \vdots \\
    0\\
    \hline
    \mathbf{0}
    \end{bmatrix},
\end{equation*}
where $j \in \{1,\dots,k\}$. Here $\mathbf{0}$ is a $k$ dimensional zero-vector and we are considering the spoke vertex $n+j$. The horizontal line indicates the separation between directed $n$-cycle vertices and the spoke vertices. We note that the directed cycle structure may cause wrapping in the entries of the column above the horizontal line, hence our assertion we are working without loss of generality. The columns corresponding to vertices $i+1$ and $i+2$, have form,
\begin{equation*}
    \mathbf{c}_{i+1} = \begin{bmatrix}
    0\\
    \vdots\\
    1\\
    0\\
    -1\\
    0 \\
    \vdots \\
    \hline
    -\mathbf{e}_j
    \end{bmatrix} \quad \text{and} \quad
    \mathbf{c}_{i+2} = 
    \begin{bmatrix}
    0\\
    \vdots\\
    0\\
    1\\
    0\\
    -1 \\
    \vdots \\
    \hline
    \mathbf{e}_j
    \end{bmatrix}.
\end{equation*}
This structure is ensured by \cref{def:SpokedCycle} and the fact (as observed) that vertices $i+1$ and $i+2$ may interact with only one spoke vertex. From this, it is clear that $\mathbf{c}_{n+1} = \mathbf{c}_{i+1} + \mathbf{c}_{i+2}$. But this is true for every spoke vertex. Therefore, the rank of the  interaction matrix of $\mathrm{SG}(n,k)$ must be $n - 1$ if $n$ is odd and $n - 2$ if $n$ is even.
\label{lem:SpokeRank}
\end{proof}
Let $n = 2m$ if $n$ is even or $n = 2m+1$ if $n$ is odd. The fact that the directed $n$-cycle is integrable implies that its dynamics admit $m+1$ (commuting) conserved quantities. Similarly, if the dynamics generated by $\mathrm{Sp}(n,k)$ are integrable, then using \cref{lem:SpokeRank} and \cref{def:LAIntegrability} we see that its dynamics must admit $m+k+1$ conserved quantities as a necessary condition for $\mathrm{Sp}(n,k)$ to be integrable.

We can show that each spoke vertex generates a Casimir of the Poisson algebra. As before, assume that a spoke vertex ($n+j$) is connected to vertices $i$, $i+1$, $i+2$ and $i+3$. Let $\mathbf{A}$ be the interaction matrix of $\mathrm{Sp}(n,k)$. Consider the (eigen)vector $\mathbf{v} = \langle {v_1,\dots,v_{n+k}} \rangle$ with,
\begin{equation}
    v_l = \begin{cases}  -1 & \textit{$l = i+1$ \text{or} $l = i+2$}\\
    1 & \text{if $l = n+j$}\\
    0 & \text{otherwise}.
    \end{cases}
    \label{eqn:SpokedEigenvector}
\end{equation}
Direct computation shows that $\mathbf{A}\mathbf{v} = \mathbf{0}$. To see this, note that $\mathbf{v}$ has exactly two non-zero entries that correspond to entries of opposite sign in rows $i$, $i+1$, $i+2$, $i+3$ and $n+j$ in $\mathbf{A}$ and is orthogonal to all other rows. Thus, each spoke vertex produces a Casimir of form,
\begin{equation*}
    C_{n+j} = \frac{u_{n+j}}{u_{i+1}u_{i+2}}.
\end{equation*}
Simultaneously, we see (again from direct computation) that $\tilde{\mathbf{v}} = \langle {\tilde{v}_1,\dots,\tilde{v}_{n+k}} \rangle$ with,
\begin{equation}
    v_l = \begin{cases}  1 & \text{if $i \in \{1,\dots,n\}$}\\
    0 & \text{otherwise},
    \end{cases}
    \label{eqn:MainCasimir}
\end{equation}
is likewise an eigenvector of $\mathbf{A}$. Thus,
\begin{equation*}
    H_{n} = u_1u_2 \cdots u_n,
\end{equation*}
is a Casimir of the Poisson algebra. This allows us to rewrite the Casimirs (conserved quantities) generated by the spoke vertices in Itoh-style,
\begin{equation*}
    \tilde{C}_{n+j} = C_{n+j}H_n = u_1\cdots u_{i}u_{i+3}\cdots u_{n} u_{n+j}.
\end{equation*}
In particular, we can create an equivalent set of Itoh-style conserved quantities of varying polynomial degrees as follows. Let, 
\begin{align*}
    &H_{n-1} = \sum_j H_nC_{n+j}\\
    &H_{n-2} = \sum_{j_1\neq j_2} H_nC_{n+j_1}C_{n+j_2}\\
    &\hspace*{5em}\vdots\\
    &H_{n-k} = H_n \prod_j C_{n+j}.
\end{align*}
Here, $H_l$ has polynomial degree $l$. Trivially, the Hamiltonian, $H_1 = u_1 + u_2 + \cdots u_n + \cdots + u_{n+k}$ is a conserved quantity. Thus, we have $k+2$ conserved quantities. We now show that the remaining $m - 1$ conserved quantities are extracted using non-cycles (and non-edges) from the graph structure, as in \cref{eqn:ItohCC}.

Consider a non-cycle of length $2 \leq l \leq m$ in $B(n,1)$ with polynomial $T = u_{i_1}\cdots u_{i_l}$. We know that, $i_1,\dots,i_l \leq n$ and if none of the vertices $\{i_1,\dots,i_l\}$ are adjacent to a spoked vertex in $\mathrm{Sp}(n,k)$, then,
\begin{equation*}
    \dot{T} = D_{\mathrm{Sp(n, k)}}[T] = D_{B(n,1)}[T].
\end{equation*}
That is, $T$ has the same time derivative in the directed $n$-cycle as in the spoked graph, given in \cref{eqn:CycleDeriv}. Assume now this is not the case for the non-cycle, and consider an arbitrary spoke vertex that is connected to some subset of the vertices in $\{i_1,\dots,i_l\}$. By construction, this spoke vertex is connected to either one or two vertices in $\{i_1,\dots,i_l\}$. Suppose the spoke vertex is connected to two vertices in $\{i_1,\dots,i_l\}$, and without loss of generality let them be $i_1$ and $i_2$ and let the spoke be vertex $n+j$. Then we must have edges of the form $(i_1,n+j)$ and $(n+j,i_2)$ in $\mathrm{Sp}(n,k)$. Or in general, there must be one directed edge from a vertex in $\{i_1,\dots,i_l\}$ to the spoke vertex and one directed edge from the spoke vertex to a vertex in $\{i_1,\dots,i_l\}$. Consequently, by \cref{eqn:CycleDeriv}, $D_{\mathrm{Sp}(n,k)}[T]$ does not contain $u_{n+j}$ (because of cancellation) and so it has the same time derivative as in the directed $n$-cycle as well.

On the other hand, suppose exactly one vertex in $\{i_1,\dots,i_l\}$ is connected to the spoke vertex $n+j$. Then $u_{n+j}$ will necessarily appear in $D_{\mathrm{Sp}(n,k)}[T]$. In particular, if $\bar{C}(k)$ is the set of non-cycles of length $l$ in $B(n,1)$ (the directed $n$-cycle), then each term of,
\begin{equation*}
    D_{\mathrm{Sp}(n, k)}\left[\sum_{c \in \bar{C}(l)} \prod_{j \in c} u_j\right],
\end{equation*}
will contain a spoked vertex variable $u_{n+j}$ (for some $j$) in the product. As a result, we must introduce non-cycles containing the spoked vertices. Let $\bar{C}_{Sp(n,k)}(l)$ be the set of length $l$ non-cycles (including non-edges) in the spoke vertex with $n$ vertices and $k$ spokes. As in the case of the cycle, $2 \leq l \leq \left\lfloor\tfrac{n}{2}\right\rfloor = m$. We now have the conserved quantities,
\begin{equation}
    H_l = \sum_{c \in \bar{C}_{Sp(n,k)}(l)} \prod_{j \in c} u_j,
    \label{eqn:SpokeNonCycle}
\end{equation}
for $2 \leq l \leq m$. We now use the same symmetry argument used in our discussion of the Itoh style conserved quantities of the cycle (with an additional observation) to see that $D_{\mathrm{Sp}(n,k)}\left[H_l\right] = 0$.

As before, symmetry ensures all terms cancel, however there is an extra parity symmetry in play that is not present in the cycle symmetry, and it explains (\textit{post facto}) the definition of spoked graphs. Without loss of generality, assume a spoke occurs at $i_1,i_2,i_3,i_4$. Then $(i_1,i_3)$ is a non-edge as is $(i_2,i_4)$ and these may occur in non-cycles. Notice that $i_1$ and $i_3$ interact with the spoke in opposite ways (i.e., an imaginary flow goes from $i_1$ to the spoke and from the spoke to $i_3$). That is, their edges have opposite parity with respect to the spoke. This symmetry (or balance) coupled with the cycle symmetry ensures cancellation of all terms in $D_{\mathrm{Sp}(n,k)}\left[H_l\right]$. This also explains why spokes have exactly two in edges and two out edges, and why no two distinct spoke vertices can be adjacent to the same vertex in the same way. Modifying the spoking structure from \cref{def:SpokedCycle} would prevent derivatives of polynomials of the form \cref{eqn:SpokeNonCycle} from being zero. Thus, we have shown each spoke graph admits $m+k+1$ conserved quantities.

Using a similar argument structure but applied to the bracket \cref{eqn:Bracket} (or adopting Itoh's proof approach from \cite{I08}), it is clear that the inherent symmetry in the cycle underlying the spoke graph as well as the parity symmetry in the spoke vertices ensures that any two conserved quantities must commute with each other under the action of the bracket \cref{eqn:Bracket}. We state the following lemma, summarizing the discussion above.

\begin{lemma} Let $H_1, H_2,\dots, H_m, H_{n-k},\dots,H_n$ be the polynomials defined above. Then these are the conserved quantities for $\mathrm{Sp}(n,k)$ and furthermore these quantities commute under the action of the nonlinear bracket in \cref{eqn:Bracket}.
\end{lemma}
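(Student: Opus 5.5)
The proof splits into three kinds of statements: facts about Casimirs, conservation of the non-cycle quantities $H_2,\dots,H_m$ from \cref{eqn:SpokeNonCycle}, and their pairwise involution. We dispose of the first kind structurally and handle the second and third with a single sign-reversing pairing argument, modelled on Itoh's proof for the Volterra lattice \cite{I08}.

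\emph{Casimirs.} By \cref{eqn:TimeDerivativeOfProduct}, the monomials $H_n$ and $C_{n+1},\dots,C_{n+k}$ are Casimirs of $\pi_\mathbf{A}$, since their exponent vectors are the kernel vectors in \cref{eqn:SpokedEigenvector} and \cref{eqn:MainCasimir}. Each of $H_{n-1},\dots,H_{n-k}$ is a polynomial in these Casimirs, and Casimirs form a subalgebra. Hence every $H_{n-r}$, $0\le r\le k$, has zero bracket with every function. This settles all brackets involving $H_{n-k},\dots,H_n$.

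Next, $\dot F=\{F,H_1\}_\mathbf{A}=D_{\mathrm{Sp}(n,k)}[F]$. So conservation of $H_2,\dots,H_m$ is the same as their involution with $H_1$. What remains is to prove $\{H_l,H_1\}_\mathbf{A}=0$ and $\{H_l,H_{l'}\}_\mathbf{A}=0$ for $2\le l,l'\le m$.

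\emph{The key formula.} For monomials $T=\prod_{a\in S}u_a$ and $T'=\prod_{b\in S'}u_b$, the bracket \cref{eqn:Bracket} gives
\begin{equation*}
\{T,T'\}_\mathbf{A}=TT'\sum_{a\in S}\sum_{b\in S'}A_{ab}.
\end{equation*}
Because $S$ is a non-cycle, $A_{ab}=0$ whenever $a,b\in S\cap S'$. Taking $T'=H_1$ recovers the flow-counting formula \cref{eqn:CycleDeriv}, now with extra contributions from spoke vertices.

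Thus $\{H_l,H_{l'}\}_\mathbf{A}$ is a signed sum over triples $(S,S',(a,b))$ with $(a,b)$ an edge. The plan is to build a sign-reversing involution on these triples that preserves the product $TT'$; the case $\{H_l,H_1\}_\mathbf{A}$ is the special case $S'=\{b\}$.

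\emph{Building the involution.} Away from spokes, use Itoh's cycle involution. A term in which $a\in S$ is joined to a cycle neighbour $b=a\pm1$ is paired with the term obtained by sliding $a$ to the other side of $b$. On the cycle this preserves the non-cycle property and flips the sign of $A_{ab}$.

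For a spoke $s=n+j$ with in-neighbours $i_j,i_j\oplus1$, out-neighbours $i_j\oplus3,i_j\oplus4$, and non-neighbour $i_j\oplus2$, two kinds of terms must be paired.
\begin{itemize}
\item Terms containing $u_s$ together with exactly one of its neighbours. Pair the in-edge $(i_j,s)$ with the out-edge $(s,i_j\oplus3)$, and $(i_j\oplus1,s)$ with $(s,i_j\oplus4)$. This is the ``parity'' symmetry noted before the lemma. It is consistent because $(i_j,i_j\oplus3)$ and $(i_j\oplus1,i_j\oplus4)$ are both non-edges, so replacing one endpoint by its partner preserves the non-cycle condition.
\item Terms where $s$ itself lies in a non-cycle. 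Here the Casimir relation $\mathbf{c}_{s}=\mathbf{c}_{i_j\oplus1}+\mathbf{c}_{i_j\oplus2}$ from \cref{lem:SpokeRank} says that $s$ flows like the pair $\{i_j\oplus1,i_j\oplus2\}$. I would use this to map such terms to cycle terms that are already paired by Itoh's involution.
\end{itemize}

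The spacing conditions $i_{j+1}\ge i_j+3$ and $k\le\lfloor n/3\rfloor$ ensure that no cycle vertex meets two spokes in the same role. Consequently the local involutions near different spokes, and the cycle involution, act on disjoint sets of terms and assemble into one global involution.

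\emph{Main obstacle.} I expect the hardest step to be verifying that this assembled map is a genuine fixed-point-free involution on the full set of triples. In particular, sliding a vertex past a spoke, or swapping $i_j\leftrightarrow i_j\oplus3$, must keep $S$ and $S'$ non-cycles of the correct lengths, including in the wrap-around cases modulo $n$.

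I would first check the involution by hand on $\mathrm{Sp}(6,2)$ and $\mathrm{Sp}(7,1)$, the examples in \cref{fig:Spoke}, and also test it numerically there by computing the brackets symbolically. I would then organise the general verification as a case analysis on $|S\cap N(s)|\in\{0,1,2\}$ and on whether $s\in S$, for each spoke $s$, where $N(s)$ is the set of neighbours of $s$.
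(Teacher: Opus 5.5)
\paragraph{Verdict.} Your proposal has a genuine gap. The outer layer is sound and matches the paper's argument: the Casimirs are handled by structure, conservation reduces to $\{H_l,H_1\}_\mathbf{A}=0$, and the monomial formula $\{u^S,u^{S'}\}_\mathbf{A}=u^Su^{S'}\sum_{a\in S,b\in S'}A_{ab}$ is correct. The paper's own proof is only a sketch of this kind: local parity at the spoke plus cycle symmetry. The trouble is in the spoke-local part, and then in the commutation step.

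\paragraph{The spoke-local analysis is done on the wrong graph.} You read the neighbourhood literally from \cref{def:SpokedCycle}: in-edges from $i_j,i_j\oplus1$, out-edges to $i_j\oplus3,i_j\oplus4$, and a gap at $i_j\oplus2$. For that graph $H_2$ is not conserved.
\begin{itemize}
\item $\{i_j\oplus2,n+j\}$ is a non-edge, and both of its vertices are out-neighbours of $i_j\oplus1$. So $\dot H_2$ contains $2u_{i_j\oplus1}u_{i_j\oplus2}u_{n+j}$.
\item Nothing can cancel this term, because the other two pairs in that triple are edges. For example, with $n=6$ and $i_1=1$ one gets $2u_2u_3u_7-2u_3u_4u_7$ in $\dot H_2$.
\item For that graph the relation $\mathbf{c}_{n+j}=\mathbf{c}_{i_j\oplus1}+\mathbf{c}_{i_j\oplus2}$ that you invoke also fails. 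The kernel vector is instead $\mathbf{e}_{n+j}-\mathbf{e}_{i_j\oplus1}-\mathbf{e}_{i_j\oplus2}-\mathbf{e}_{i_j\oplus3}$.
\end{itemize}
Everything the lemma actually refers to uses a different configuration: the spoke is adjacent to four consecutive vertices $i,i\oplus1,i\oplus2,i\oplus3$, with in-edges from the first two and out-edges to the last two. This is the configuration behind \cref{lem:SpokeRank}, the Casimir $C_{n+j}=u_{n+j}/(u_{i+1}u_{i+2})$, the $\mathrm{Sp}(5,1)$ example with $u_5u_6$ in $H_2$, and the parity remark about the non-edges $(i_1,i_3)$ and $(i_2,i_4)$.

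With this configuration, conservation becomes a clean local check. $\{H_l,H_1\}_\mathbf{A}=0$ holds if and only if every vertex $b$ has as many in- as out-neighbours in each independent $l$-set $S\not\ni b$ with $|S\cap N(b)|\ge 2$. This is true:
\begin{itemize}
\item at the spoke, for the pairs $\{i,i\oplus2\}$, $\{i,i\oplus3\}$ and $\{i\oplus1,i\oplus3\}$;
\item at $i$, for the pair $\{i\ominus1,n+j\}$, and at $i\oplus3$, for the pair $\{n+j,i\oplus4\}$;
\item at a vertex shared by two spokes when $i_{j+1}=i_j+3$.
\end{itemize}
Your pairings $i_j\leftrightarrow i_j\oplus3$ and $i_j\oplus1\leftrightarrow i_j\oplus4$ belong to the wrong graph, and the case analysis has to be redone for this one.

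\paragraph{The commutation step is not covered.} Even with the correct graph, your treatment of terms with $n+j\in S$ does not work. Replacing $n+j$ by $\{i_j\oplus1,i_j\oplus2\}$ produces a set of a different size that contains a cycle edge. That set is not a non-cycle of $B(n,1)$, so Itoh's involution does not act on it.

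The substitution $u_{n+j}=C_ju_{i_j\oplus1}u_{i_j\oplus2}$ is legitimate: $C_j$ is a Casimir, and the brackets among $u_1,\dots,u_n$ are those of the cycle. It shows what is really needed. The $C$-free part of $\{H_l,H_{l'}\}_\mathbf{A}$ is Itoh's cycle identity. The part linear in $C_j$, however, is a new identity, in which terms with the spoke in $S$ cancel against terms with the spoke in $S'$.

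For instance, take $\mathrm{Sp}(7,1)$ with spoke $8$ on $1,2,3,4$. Then $H_2=h_2+Cu_2u_3(u_5+u_6+u_7)$ and $H_3=h_3+Cu_2u_3u_5u_7$, and
\[
\{h_2,u_2u_3u_5u_7\}=-\{u_2u_3(u_5+u_6+u_7),h_3\}=u_2u_3u_5u_7\bigl(u_3(u_1+u_5+u_6+u_7)-u_2(u_4+u_5+u_6+u_7)\bigr)\neq 0 .
\]
These two brackets cancel only against each other. So any global involution must move the spoke between the two factors, and your plan has no mechanism for this. That cancellation is the real content of the commutation claim.
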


Lastly, we show that all of these conserved quantities are algebraically independent. Trivially, the conserved quantities inherited from the directed cycle (i.e., non-cycles) are algebraically independent. We now apply the same technique used in Section 3 of Visomirski and Griffin \cite{visomirski2025integrability}. Apply the Jacobian criterion \cite[Theorem 2.3]{P08} to the set of conserved quantities by forming their Jacobian. By construction, each conserved quantity generated from $\mathrm{Sp}(n,k)$ has a distinct polynomial degree. The Hamiltonian has degree $1$. Casimirs have degrees $n$, $n-1, \cdots n-k$, where $k \leq \left\lfloor\tfrac{n}{3}\right\rfloor$. Conserved quantities have degrees $2, 3, \cdots, \left\lfloor\tfrac{n}{2}\right\rfloor$. Consequently, each row of the Jacobian matrix will contain polynomials with unique degree compared to the other rows of the matrix (again by construction of the conserved quantities). Moreover, because each column in the Jacobian will eliminate a variable, it is straightforward to see that Gaussian elimination applied to this Jacobian matrix will yield the identity (as in \cite{visomirski2025integrability}). Thus, by the Jacobian criterion, we must have an algebraically independent set.  

\begin{theorem} Let $G = \mathrm{Sp}(n,k)$ with $k \leq \left \lfloor \tfrac{n}{3}\right\rfloor$. Then $G$ is integrable.
\end{theorem}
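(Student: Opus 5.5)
The theorem follows by checking the three conditions of \cref{def:LAIntegrability} against the material assembled above. The plan is to fix $n = 2m$ or $n = 2m+1$ and the base vertices $i_1,\dots,i_k$ of the spokes, and to take as candidate family
\[
H_1,\; H_2,\dots,H_m,\; H_{n-k},\dots,H_{n-1},\; H_n .
\]
Here $H_2,\dots,H_m$ are the non-cycle sums from \cref{eqn:SpokeNonCycle}, and $H_{n-k},\dots,H_n$ are the elementary-symmetric combinations of the spoke Casimirs $C_{n+j}$ multiplied by $H_n$. This gives $s = m+k+1$ functions. By \cref{lem:SpokeRank}, $\tfrac12\mathrm{Rank}(\mathbf{A})$ equals $m$ in both parities: for $n$ odd it is $(n-1)/2$, for $n$ even it is $(n-2)/2 = m-1$. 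The even case therefore has to be recounted, and I will settle it first. For $n=2m$ the base cycle needs $m+1$ quantities and the ambient dimension is $n+k$. The condition $s + \tfrac12\mathrm{Rank}(\mathbf{A}) = n+k$ then reads $(m+k+1) + (m-1) = 2m+k$, which holds. For $n$ odd it reads $(m+k+1)+m = 2m+1+k$, which also holds. So condition (3) is arithmetic once \cref{lem:SpokeRank} is granted. If the count for $n$ even is off by one, I would add the extra cycle Casimir (the alternating product $\prod u_{\mathrm{odd}}/\prod u_{\mathrm{even}}$, suitably corrected by spoke vertices) to the family.

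Condition (2), conservation and involution, comes from the preceding lemma. The Casimirs $H_n$ and $C_{n+j}$ Poisson-commute with everything because they come from kernel vectors of $\mathrm{A}$ via \cref{eqn:TimeDerivativeOfProduct}. The same therefore holds for every polynomial expression in them, including $H_{n-1},\dots,H_{n-k}$. Each $H_l$ commutes with $H_1$ because it is conserved. The only genuinely nontrivial brackets are $\{H_l,H_{l'}\}_\mathbf{A}$ for $2\le l,l'\le m$.

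For these brackets I would follow Itoh's argument for the cycle. Expand the bracket as a sum over pairs of monomials (non-cycles $c,c'$). Using \cref{eqn:Bracket}, each pair contributes $\sum A_{ab}u_au_b\,\partial_a\partial_b$ terms supported on edges between $c$ and $c'$. I would then pair terms by the reflection/shift symmetry of the cycle, together with the spoke parity: a spoke vertex $n+j$ has in-neighbours $i,i+1$ and out-neighbours $i+3,i+4$, so swapping $i\leftrightarrow i+3$ and $i+1\leftrightarrow i+4$ reverses edge signs. This should produce an involution on contributing pairs that flips sign. Constructing this sign-reversing involution explicitly is the main obstacle. If it resists, I would fall back on expressing $H_l$ through the cycle's integrals evaluated at substituted variables, for instance by absorbing $u_{n+j}$ via the Casimirs, and then invoking the known involutivity for $B(n,1)$.

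Condition (1), algebraic independence, uses the Jacobian criterion as described above. Choosing a point where only selected variables are nonzero, the Jacobian has an invertible $(m+k+1)$-minor. Concretely, I would choose columns $u_{n+1},\dots,u_{n+k}$, which only the spoke quantities see with distinct degree patterns, together with $m+1$ cycle variables handled by the known independence for $B(n,1)$. This gives a block-triangular minor. Combining the three conditions yields integrability of $\mathrm{Sp}(n,k)$.
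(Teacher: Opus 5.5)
Your plan follows the paper's architecture: the same $m+k+1$ functions, the count from the rank lemma, involution resting on the preceding lemma, and the Jacobian criterion. The count is fine once ``$\tfrac12\mathrm{Rank}(\mathbf{A})=m$ in both parities'' is corrected to $m-1$ for $n=2m$, as you then do. Drop the hedge about adjoining the alternating Casimir. For $n=2m$, $H_m=\prod u_{\mathrm{odd}}+\prod u_{\mathrm{even}}$ together with $H_n$ already carries it, and adjoining it would create a dependence.

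\textbf{The independence step fails as written.} The block-triangular minor does not exist. The spoke columns are not seen only by the spoke quantities: $\partial H_1/\partial u_{n+j}=1$, and the corrected non-cycle sums contain spoke variables (the term $u_5u_6$ in $H_2$ for $\mathrm{Sp}(5,1)$). So at an interior point neither off-diagonal block vanishes. At a point with $u_{n+1}=\dots=u_{n+k}=0$ the block $\partial H_{n-r}/\partial u_l$, $l\le n$, does vanish. But every monomial of $H_{n-r}$ carries $r$ spoke factors, so for $r\ge 2$ the whole gradient of $H_{n-r}$ vanishes there, and your spoke block has rank one once $k\ge 2$.

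The fix is to use the individual Itoh-style Casimirs $\tilde C_{n+j}=H_nC_{n+j}$ instead of their symmetric combinations. They are the roots of a monic polynomial whose coefficients are, up to constants, $\pm H_{n-r}H_n^{r-1}$, so the swap does not change the transcendence degree. Take a point where all spoke variables vanish and the cycle variables are generic and positive. There the $\tilde C$-rows are diagonal on the spoke columns, with entries $\prod_{l\le n,\,l\ne i_j+1,i_j+2}u_l$, and zero on the cycle columns. Meanwhile $H_l$ restricted to $u_{n+1}=\dots=u_{n+k}=0$ is exactly the cycle's $H_l$. So the complementary block is the Jacobian of the Volterra integrals, which has rank $m+1$. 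This is a genuine independence proof, and stronger than the paper's appeal to distinct degrees, which by itself does not force independence.

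\textbf{Involution is left where the paper leaves it.} Your reduction to the brackets $\{H_l,H_{l'}\}_{\mathbf{A}}$ with $2\le l,l'\le m$ is correct. But these are covered only by citing the lemma, since you never construct the sign-reversing involution. Your fallback would not close the gap. Passing to coordinates $(u_1,\dots,u_n,C_{n+1},\dots,C_{n+k})$ does turn $\pi_{\mathbf{A}}$ into the cycle bracket with the $C_{n+j}$ as parameters, because $\mathbf{A}$ restricted to $\{1,\dots,n\}$ is the cycle matrix. However, the $H_l$ become deformations of the Volterra integrals, e.g.\ $H_1=H_1^{\mathrm{cyc}}+Cu_2u_3$ and $H_2=H_2^{\mathrm{cyc}}+Cu_2u_3u_5$ for $\mathrm{Sp}(5,1)$. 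Involutivity for $B(n,1)$ therefore says nothing about them.

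If you do build the involution, use the right geometry. The Casimir $u_{n+j}/(u_{i+1}u_{i+2})$ you rely on requires the spoke to sit on four consecutive vertices, $i,i+1\to n+j\to i+2,i+3$; the rank lemma and the $\mathrm{Sp}(5,1)$ example assume the same. With out-neighbours $i+3,i+4$ it is not a Casimir: the kernel vector would need a $-1$ at $i+3$ too. The parity swap is therefore $i\leftrightarrow i+2$, $i+1\leftrightarrow i+3$, which matches the paper's remark on the non-edges $(i_1,i_3)$ and $(i_2,i_4)$.
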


We illustrate the process described above with the simple spoke graph generated from the directed 5-cycle with a single spoke (see \cref{fig:SimpleSkipGraph}). This graph was treated as an outlier by Visomirski and Griffin in \cite{visomirski2025integrability}.
\begin{figure}[htbp]
\centering
\includegraphics[width=0.45\textwidth]{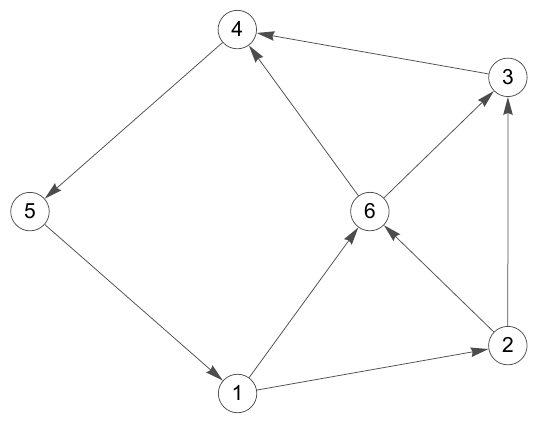}
\caption{Directed 5 cycle with a single spoke ($\mathrm{Sp}(5,1)$)}
\label{fig:SimpleSkipGraph}
\end{figure}
The Hamiltonian is,
\begin{equation*}
    H_1 = u_1 + u_2 + u_3 + u_4 + u_5 + u_6.
\end{equation*}
There are two Casimirs,
\begin{equation*}
    H_5 = u_1 u_2 u_3 u_4 u_5,
\end{equation*}
from the cycle and,
\begin{equation*}
    C = \frac{u_6}{u_2u_3},
\end{equation*}
which generates,
\begin{equation*}
    H_4 = C \cdot H_5 = u_1u_4u_5u_6.
\end{equation*}
The remaining conserved quantity is generated from the non-edges and is,
\begin{equation*}
    H_2 = u_1u_3 + u_1u_4 + u_2u_4 + u_2u_5 + u_3u_5 + u_5u_6.
\end{equation*}
Notice, all the terms in $H_2$ except $u_5u_6$ would appear in the corresponding conserved quantity in the directed 5-cycle. The $u_5u_6$ term corrects this conserved quantity to account for the addition of the spoke vertex. Code to compute the conserved quantities for arbitrary spoke graphs is provided in the supplemental information (SI).

Interestingly, the eigenvectors leading to the Casimirs in the previous discussion also show the existence of an infinite set of interior fixed points in the dynamics generated by a spoked graph. Assuming there are $k$ spoked vertices and $n$ vertices in the underlying cycle, let $\mathbf{v}_{n+1},\dots,\mathbf{n+k}$ denote the eigenvectors with form given in \cref{eqn:SpokedEigenvector} and as before let $\tilde{v}= \langle{1,\dots,1,0,\dots,0}\rangle$ be the eigenvector corresponding to the Casimir $H_n = u_1\cdots u_n$ and given in \cref{eqn:MainCasimir}. Then,
\begin{equation*}
    \mathbf{u}^* = r\left(\mathbf{v}_{n+1} + \cdots + \mathbf{v}_{n+k}\right) + a\tilde{\mathbf{v}},
\end{equation*}
necessarily has the property that $\mathbf{A}\mathbf{u}^* = \mathbf{0}$. Consequently, if we choose $a$ and $r$ so that $\mathbf{u}^* \in \mathrm{int}\,\Delta_{n+k-1}$, then we have found an interior fixed point. Trivially, we require $0 < a < 1$ and $0 < r < 1$, since some indices of $\mathbf{u}$ will be precisely equal to those values by construction, while the remaining indices will have value $a - r$ and therefore we must have $r < a$. Finally, to ensure the proposed solution is on the simplex we have,
\begin{equation*}
    \sum_i u_i^* = na - rk = 1.
\end{equation*}
A straightforward calculation shows that if we have,
\begin{align*}
    & 0 < r < \frac{1}{n-k} \quad \text{and}\\
    &a = \frac{1+kr}{n},
\end{align*}
then the resulting fixed point will be in the interior of the relevant unit simplex. Thus, unlike the usual case where an integrable system admits a single (elliptic) fixed point, here we have an infinite number of fixed points with their own local dynamics, necessarily leading to integrable dynamics.

\section{Anti-Cloning}\label{sec:AntiCloning}

Previous work by \cite{EKV21} \cite{EKV22} defined the \textit{cloning} operation, in which  an additional vertex is added to a graph having the same neighbourhood as an existing vertex, and showed that the operation preserves graph integrability. Visomirski and Griffin \cite{visomirski2025integrability} showed that cloning is a special case of a kind of graph embedding, generalising these results. In contrast, we will show that the anti-cloning operation preserves integrability only when iteratively applied to a balanced tournament. The approach taken in this section mirrors the approach taken in the previous section.

Before proceeding, recall that the balanced five vertex tournament, with a single anti-cloned vertex is illustrated in \cref{fig:AntiCloning}. It is worth noting that this is not the smallest example of an anti-cloned tournaments. Anti-cloning a vertex of the balanced 3-tournament (or the directed 3-cycle) is the most basic example of this family (see \cref{fig:seedAC}) . 
\begin{figure}
    \centering
    \includegraphics[width=0.5\linewidth]{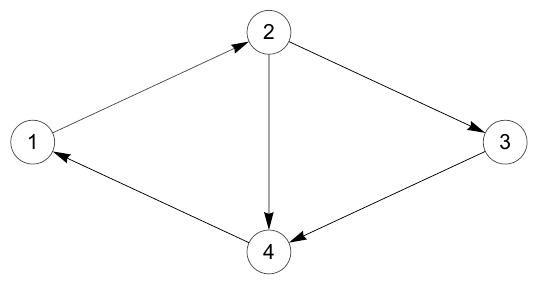}
    \caption{Most basic example of the Anti-Cloned Tournament graph family. Note that it is also the most basic example of the Skip Vertex graph family}
    \label{fig:seedAC}
\end{figure}
However, this graph also represents the most basic example of the skip-vertex graphs previously introduced by Visomirski and Griffin \cite{visomirski2025integrability}. This overlap prevented immediate identification of this new family of graphs in previous work. 

\begin{lemma} Let $G=\mathrm{To}(n)$ be a balanced tournament on $n = 2m+1$ vertices with vertex set $\{1,\dots,n\}$. Suppose $H = \mathrm{Ac}(G,{i_1,\dots,i_k})$ and denote the (distinct) anti-cloned vertices of $i_1,\dots,i_k$ by $n+1,\dots,n+k$. Then, the dynamics generated by $H$ admit the Casimirs,
\begin{equation*}
    C_{i_ji'_j} = u_{i_j}u_{n+j},
\end{equation*}
for each pair $(i_j,n+j)$ with $j \in \{1,\dots,k\}$.
\label{lem:AntiCloneCasimir}
\end{lemma}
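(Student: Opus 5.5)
The plan is to use the eigenvector criterion from \cref{eqn:TimeDerivativeOfProduct}. A monomial $u_1^{\alpha_1}\cdots u_{n+k}^{\alpha_k}$ is a Casimir of $\pi_\mathbf{A}$ precisely when $\bm{\alpha}$ lies in the kernel of $\mathbf{A}$. So for each $j$ we take $\bm{\alpha}^{(j)} = \mathbf{e}_{i_j} + \mathbf{e}_{n+j}$ and show that $\mathbf{A}\bm{\alpha}^{(j)} = \mathbf{0}$. Since $\mathbf{A}$ is skew-symmetric, this is the same as showing that the $i_j$-th column and the $(n+j)$-th column of $\mathbf{A}$ sum to zero, i.e., $A_{l,i_j} + A_{l,n+j} = 0$ for every row $l$.

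\textbf{Step 1: the anti-clone relation.} By \cref{def:AntiCloning}, the row of $n+j$ is the negative of the row of $i_j$ on the common neighbourhood. Concretely, $A_{n+j,l} = -A_{i_j,l}$ for every $l \in \{1,\dots,n\}\setminus\{i_j\}$, and $A_{n+j,i_j} = 0$. The last equality holds because $i_j$ is not a neighbour of itself, so it is not a neighbour of its anti-clone either. Transposing with skew-symmetry gives $A_{l,n+j} = -A_{l,i_j}$ for these rows $l$. This yields $(\mathbf{A}\bm{\alpha}^{(j)})_l = 0$ for all original vertices $l \neq i_j$. For $l = i_j$ we get $A_{i_j,i_j} + A_{i_j,n+j} = 0$, and for $l = n+j$ we get $A_{n+j,i_j} + 0 = 0$.

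\textbf{Step 2: rows of the other anti-clone vertices.} The remaining rows are $l = n+j'$ with $j' \neq j$, and this is the only step where we must be careful. We need $A_{n+j',i_j} + A_{n+j',n+j} = 0$. Read literally, \cref{def:AntiCloning} says each anti-clone copies the neighbourhood of its original with orientations reversed. The convention to adopt is that anti-clones are added relative to the original tournament $\mathrm{To}(n)$ and are mutually non-adjacent, so $A_{n+j',n+j} = 0$. Under that convention, $A_{n+j',i_j} = -A_{i_{j'},i_j}$, and this is generally nonzero because $\mathrm{To}(n)$ is complete. That would break the Casimir.

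To handle this, I would read the iterative construction as follows: the anti-clone $n+j$ of $i_j$ also reverses edges to the previously added anti-clones. In that case $A_{n+j,n+j'} = -A_{i_j,n+j'} = A_{i_j,i_{j'}}$, using Step 1 applied to $j'$. Then
\[
A_{n+j',i_j} + A_{n+j',n+j} = -A_{i_{j'},i_j} - A_{i_j,i_{j'}} = 0
\]
by skew-symmetry. By symmetry the same holds regardless of which anti-clone was added first. So I would fix this interpretation explicitly: the undirected neighbourhood of $n+j$ is that of $i_j$ in the current graph, with orientations reversed. With this convention, every row of $\mathbf{A}\bm{\alpha}^{(j)}$ vanishes.

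\textbf{Conclusion.} Applying \cref{eqn:TimeDerivativeOfProduct} then gives $\frac{d}{dt}(u_{i_j}u_{n+j}) = u_{i_j}u_{n+j}\,(\bm{\alpha}^{(j)})^T\mathbf{A}\mathbf{u} = 0$. The bracket $\{u_{i_j}u_{n+j}, F\}_\mathbf{A}$ vanishes for all $F$ by the same kernel computation, so $C_{i_ji'_j}$ is a Casimir. Note that nothing specific to the tournament structure is used here beyond the anti-clone relation. The main obstacle is therefore Step 2, fixing the adjacency convention among multiple anti-clones. The tournament property should matter only later, when counting the remaining conserved quantities.
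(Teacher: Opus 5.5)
Your proof is correct and follows the paper's route: the kernel vector $\mathbf{e}_{i_j}+\mathbf{e}_{n+j}$, which comes from the columns of $i_j$ and $n+j$ being additive inverses, yields the monomial Casimir via \cref{eqn:TimeDerivativeOfProduct}. Your Step 2 is more careful than the paper's one-line argument, but the adjacency among anti-clones you adopt is not an extra convention: it is forced by \cref{def:AntiCloning}, which requires the rows of $i_j$ and $n+j$ to be additive inverses in the final matrix of $H$, and this makes the mutually non-adjacent reading you considered (and correctly rejected) inadmissible.
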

\begin{proof} Let $\mathbf{A}$ be the interaction matrix generated from $H$. The lemma follows immediately from the fact that column $i_j$ and $n+j$ in $\mathbf{A}$ are additive inverses and therefore generate the eigenvector $\mathbf{v}_j$ with components,
\begin{equation*}
    v_{j_l} = \begin{cases} 1 & \text{if $l = j$ or $l = j'$}\\
    0 & \text{otherwise}.
    \end{cases}
\end{equation*}
Thus, $u_{i_j}u_{n+j}$ is a Casimir for all $j \in \{1,\dots,k\}$.
\end{proof}

It is worth noting that this lemma holds for anti-clones in arbitrary graphs, and not just balanced tournament graphs. However, as we will see, the balanced tournaments have the property that anti-cloning preserves the structure of the existing conserved quantities. 

\begin{proposition} Let $G=\mathrm{To}(n)$ be a balanced tournament on $n = 2m+1$ vertices with vertex set $\{1,\dots,n\}$. Suppose (without loss of generality) $H = \mathrm{Ac}(G,\{1,\dots,k\})$ and denote the anti-cloned vertices of $1,\dots,k$ by $n+1,\dots,n+k$. Then the dynamics generated by $H$ admit $m+k+1$ conserved quantities, a sufficient number for integrability. 
\label{prop:AntiClone}
\end{proposition}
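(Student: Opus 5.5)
The plan is to count and then exhibit the conserved quantities. First fix the target number. Let $\mathbf{A}$ be the interaction matrix of $H=\mathrm{Ac}(G,\{1,\dots,k\})$. Since $\mathrm{To}(2m+1)$ has rank $2m$, I will check that anti-cloning does not change the rank. Restricted to the original vertices, the column of $n+j$ is the negative of the column of $j$. Anti-clones are not adjacent to their originals, so the extra entries in rows $n+1,\dots,n+k$ also fit this pattern. As in the proof of \cref{lem:SpokeRank}, each new column therefore lies in the span of the old ones, and $\mathrm{Rank}(\mathbf{A})=2m$. By \cref{def:LAIntegrability} with $n+k$ variables, integrability then requires $s=(n+k)-m=m+k+1$ commuting, independent conserved quantities. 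This is the count asserted in the proposition.

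Next I would assemble the candidates in two groups.
\begin{itemize}
\item \emph{Casimirs.} \cref{lem:AntiCloneCasimir} supplies $k$ Casimirs $C_j=u_ju_{n+j}$, $j=1,\dots,k$. In particular $u_{n+j}=C_j/u_j$ on each symplectic leaf.
\item \emph{Tournament quantities.} From \cref{thm:Tournament}, $\mathrm{To}(2m+1)$ has the $m+1$ quantities $H_1,H_3,\dots,H_{2m+1}$ of \cref{eqn:ItohCC2}. Here $H_1$ becomes $u_1+\dots+u_{n+k}$, which is conserved automatically.
\end{itemize}
For $H_{2l+1}$ with $l\ge1$, I would compute $D_H[H_{2l+1}]$ using the flow-counting picture behind \cref{eqn:CycleDeriv}. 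Each monomial $T=\prod_{i\in t}u_i$ over a sub-tournament $t$ picks up the terms $\sum_{i\in t}\mathbf{e}_i^T\mathbf{A}\mathbf{u}$. Terms in the original variables cancel exactly as in $\mathrm{To}(n)$. The new terms are linear in the $u_{n+j}$: vertex $n+j$ is a successor of every predecessor of $j$ and vice versa, so its contribution enters with the sign opposite to that of $u_j$. I would then build corrected quantities $\widehat H_{2l+1}$ by adding monomials in which some $u_j$, $j\le k$, is exchanged for $u_{n+j}$, with vertex sets chosen combinatorially. The natural guess is the vertex subsets of $H$ that induce balanced sub-tournaments once $n+j$ is allowed to stand in for $j$, adjusted by orientation. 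I would choose these sets so that their derivatives cancel the stray $u_{n+j}$ terms, using the in/out balance of every vertex in $\mathrm{To}(n)$. Because $C_j$ is a Casimir, any residual discrepancy can also be absorbed by multiplying terms by $C_j$. I would verify the ansatz first on \cref{fig:seedAC} and \cref{fig:AntiCloning}, and then argue the cancellation in general.

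The main obstacle is this correction step: showing that a single uniform combinatorial rule makes $D_H[\widehat H_{2l+1}]=0$ for all $l$ and all $k$ simultaneously. This is exactly where the balance of the tournament must be used; the paper's numerics show the analogous construction fails for cycles. Once the corrections are in place, involution would follow as in the spoke case. Casimirs commute with everything, and the brackets $\{\widehat H_a,\widehat H_b\}$ reduce, via the same anti-clone sign symmetry, to Itoh's tournament computation. Independence would follow from the Jacobian criterion \cite[Theorem 2.3]{P08}. The $\widehat H_{2l+1}$ have distinct degrees $1,3,\dots,2m+1$, and each $C_j$ is the only quantity involving the pair $u_j,u_{n+j}$ in the form $u_ju_{n+j}$. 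Restricting the Jacobian to suitable columns therefore gives a nonsingular block-triangular matrix. In total this yields $(m+1)+k$ conserved quantities.
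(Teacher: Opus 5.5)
\paragraph{Where the proposal falls short.} Your architecture matches the paper's: the rank count forcing $m+k+1$ quantities, the $k$ Casimirs $u_ju_{n+j}$, $H_1$, the top Casimir $u_1\cdots u_n$ (which needs no correction), and $m-1$ corrected tournament quantities $\widehat H_{2l+1}$, $1\le l\le m-1$.

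The gap is that these last $m-1$ quantities, the only nontrivial content of the proposition, are never produced. You neither fix which monomials enter $\widehat H_{2l+1}$ nor show $D_H[\widehat H_{2l+1}]=0$, and you flag this yourself as the main obstacle. The paper commits to a definite rule, $H_{2l+1}=\sum_{t\in\mathcal{T}_H(2l+1)}\prod_{j\in t}u_j$. This sums over \emph{all} vertex sets of the enlarged graph $H$ that induce a balanced $(2l+1)$-tournament, and the paper attributes the cancellation to the sign reversal at the anti-clones.

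Your ``natural guess'' contains this rule, but the qualifier about $n+j$ standing in for $j$ is misleading if read as substitution into the old balanced sets. Swapping a vertex of a regular sub-tournament for its anti-clone reverses all of its edges and shifts every other score by $\pm1$, so the result is never balanced. The new sets come instead from \emph{unbalanced} subsets of $G$. In the paper's example, the correction $u_3u_4u_6$ is the transitive triple $\{1,3,4\}$ with $1$ replaced by $6$. Your fallback of absorbing residual terms by multiplying monomials by $C_j$ cannot work either, since $D_H[C_jT]=C_jD_H[T]$.

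\paragraph{How to close the step.} Use the paper's rule and reduce conservation to a combinatorial identity. For balanced $t$ the internal contributions cancel, so $D_H[\prod_{i\in t}u_i]=\prod_{i\in t}u_i\sum_{x\notin t}\delta_t(x)u_x$. Here $\delta_t(x)$ is the number of edges from $x$ into $t$ minus the number from $t$ into $x$. Conservation of $H_{2l+1}$ is then equivalent to $\sum_x\delta_{S\setminus\{x\}}(x)=0$ for every $(2l+2)$-set $S$, with the sum over those $x\in S$ such that $S\setminus\{x\}\in\mathcal{T}_H(2l+1)$.

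Two cases are immediate. If $S$ contains a pair $\{j,n+j\}$, only $x\in\{j,n+j\}$ can occur, and each term vanishes. For instance, with $t=S\setminus\{n+j\}\ni j$, one gets $\delta_t(n+j)=\mathrm{in}_t(j)-\mathrm{out}_t(j)=0$, where these are the in- and out-degrees of $j$ inside $t$. Sets inside $\{1,\dots,n\}$ are Itoh's case.

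What remains are sets using some anti-clones but no such pair. Their induced tournaments are switchings of $G[\pi(S)]$, with $\pi\colon n+j\mapsto j$, across the anti-cloned vertices. The identity fails for general tournaments; a vertex dominating a $3$-cycle already violates it. So proving it for these switched sets is exactly where the structure of $\mathrm{To}(n)$ must enter, and that is the argument your proposal still owes.
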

\begin{proof} From \cref{lem:AntiCloneCasimir} we have identified $k$ Casimirs. By construction, we have the Hamiltonian $H_1 = u_1 + \cdots u_n$, which is a conserved quantity. The interaction matrix of $H$ will have $k$ additional rows (and columns) not present in the interaction matrix of $G$. From \cite{PG23,I87}, we know that $H_n = u_1\cdots u_n$ is both a conserved quantity and Casimir because,
\begin{equation*}
    \mathbf{v}_n = \langle{\underbrace{1,1,\dots,1}_n}\rangle
\end{equation*} is an eigenvector of the interaction matrix of $G$. It follows at once that,
\begin{equation*}
    \mathbf{v}_n' = \langle{\underbrace{1,1,\dots,1}_n,\underbrace{0,\cdots,0}_k}\rangle
\end{equation*}
is an eigenvector of the interaction matrix corresponding to $H$ because we have simply zeroed out the columns corresponding to the added vertices. Thus, we have identified $k+2$ conserved quantities. 

The remaining $m-1$ conserved quantities are composed of the sums of Casimirs of the sub-tournaments of $H$, just as in the case of the balanced tournaments. To see this, suppose $\mathcal{T}_H(2l+1)$ represents the set of vertex sets of subtournaments of $H$ of size $2l+1$ and let $\mathcal{T}_G(l)$ be similarly defined for $G$. If we have,
\begin{equation*}
    C_{2l+1}= \sum_{t \in \mathcal{T}_G(2l+1)}\prod_{j\in t}u_j,
\end{equation*}
then because $C_{2l+1}$ is conserved in the dynamics generated by $G$ it is clear that $D_H[C_{2l+1}]$ must have terms remaining that contain $u_j$ for $j \in \{n+1,\dots,n+k\}$. That is, the derivative has anti-cloned vertex terms. In fact, those terms will (in a sense) characterize the imaginary flow introduced by the addition of the anti-clone vertices. If we replace $C_{2l+1}$ with,
\begin{equation*}
    H_{2l+1}= \sum_{t \in \mathcal{T}_H(2l+1)}\prod_{j\in t}u_j,
\end{equation*}
which contains additional terms corresponding to subtournaments formed with the anti-cloned vertices, then the anti-symmetry of the edges leading to and from the anti-cloned vertices will cancel the unaccounted for terms in the derivative of $C_{2l+1}$. Thus, $H_{2l+1}$ is a conserved quantity for $H$ both by the symmetry inherited from $G$ in the edge structure on vertices $\{1,\dots,n\}$ and the anti-symmetry of the edge structure on vertices $\{n+1,\dots, n+k\}$. 

We know that there are $m-1$ proper subgraphs isomorphic to a balanced tournament in $G$. Consequently, $H$ must have $m-1$ proper subgraphs isomorphic to a balanced tournament. Thus, we have identified $m-1$ additional conserved quantities corresponding to the subtournaments of $H$ for a total of $m+k+1$ total conserved quantities. The fact that the rank of the matrix corresponding to $G$ is $2m$ and adding anti-cloned vertices does not increase this rank by \cref{lem:AntiCloneCasimir} implies that we require $m+k+1$ conserved quantities for the integrability of $H$.
\end{proof}

As in our discussion on spoke graphs, we note that we can modify the conserved quantities of $\mathrm{Ac}(2m+1,k)$ so that they each have distinct degree. Assume we have $k$ anti-cloned vertices. Replace the simple Casimirs of form $C_{n+j} = u_{i_j}u_{n+j}$ identified in the proof of \cref{lem:AntiCloneCasimir} with Casimirs of the form,
\begin{align*}
    &H_{2} = \sum_j C_{n+j}\\
    &H_{4} = \sum_{j_1\neq j_2} C_{n+j_1}C_{n+j_2}\\
    &\hspace*{5em}\vdots\\
    &H_{2k} = \prod_j C_{n+j}.
\end{align*}
These polynomials all have even degree by definition. At the same time, the remaining conserved quantities $H_1$, $H_n$ and $H_{2l+1}$ identified in the proof of \cref{prop:AntiClone} all have odd degree. Thus, these conserved quantities will be independent by the Jacobian criterion. Finally, that these all commute under the action of the bracket can be seen by symmetry argument (as with the spoke graphs or as in the above discussion) or via an Itoh style argument \cite{I08}. Thus we may summarize the above discussion in the following theorem.

\begin{theorem} Let $G = \mathrm{Ac}(2m+1, k)$, where $k \in \{0,\dots,2m+1\}$. Then $G$ is integrable. 
\end{theorem}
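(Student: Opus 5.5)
The plan is to verify the three conditions of \cref{def:LAIntegrability} for the matrix $\mathbf{A}$ of $G=\mathrm{Ac}(2m+1,k)$, with the base tournament on $\{1,\dots,n\}$, $n=2m+1$, and anti-clones $n+1,\dots,n+k$ of the vertices $i_1,\dots,i_k$.

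\textbf{Rank.} First I would fix the rank. The column of $n+j$ is the negative of the column of $i_j$. Hence $\mathrm{rank}(\mathbf{A})=\mathrm{rank}(\mathbf{A}_{\mathrm{To}(n)})=2m$, since the $2m+1$ tournament matrix has one-dimensional kernel spanned by $\mathbf{v}_n$. The required number of quantities is therefore $s=n+k-m=m+k+1$, matching \cref{prop:AntiClone}.

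\textbf{Conservation.} Next I would take the candidate family $H_1$, $H_n$, $H_2,H_4,\dots,H_{2k}$ (elementary symmetric polynomials in the Casimirs $C_{n+j}=u_{i_j}u_{n+j}$), and the sub-tournament sums $H_{2l+1}$ for $1\le l\le m-1$.
\begin{itemize}
\item $H_1$, $H_n$ and the $C_{n+j}$ are Casimirs or the Hamiltonian by \cref{lem:AntiCloneCasimir} and \cref{eqn:TimeDerivativeOfProduct}.
\item For $H_{2l+1}$ I would avoid the heuristic flow-cancellation argument and give a cleaner one. Let $\phi$ be the substitution $u_{i_j}\mapsto u_{i_j}$, $u_{n+j}\mapsto 0$ applied to the symmetric structure. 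Observe that each vertex subset $t$ inducing a balanced sub-tournament of $H$ contains at most one of $\{i_j,n+j\}$, because those two vertices are non-adjacent.
\item Replacing $i_j$ by $n+j$ in $t$ reverses every edge at that vertex. One checks by direct computation that $D_H[\prod_{t}u]$ equals $\prod_t u\cdot\sum_{s\in t}(\mathbf{A}\mathbf{u})_s$, and that $\sum_{s\in t}(\mathbf{A}\mathbf{u})_s$ depends only on outside vertices.
\item Summing over all balanced $(2l+1)$-subsets, I would pair the terms $u_tu_x$ and $u_{t'}u_{x'}$ exactly as in Itoh's proof for $\mathrm{To}(n)$ \cite{I87,PG23}. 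The anti-clone swap gives a bijection on these pairings, with the sign flipping consistently on both sides, so the cancellation persists.
\end{itemize}
This pairing verification is the main obstacle. I would first check it explicitly on $\mathrm{Ac}(\mathrm{To}(3),\{1\})$ and $\mathrm{Ac}(\mathrm{To}(5),\{1\})$, and then lift Itoh's combinatorial identity.

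\textbf{Involution and independence.} Casimirs commute with everything, so only $\{H_{2l+1},H_{2l'+1}\}$ and brackets with $H_1$ remain. The latter vanish by conservation. For the former I would reuse Itoh's involution argument \cite{I08}, transported through the same anti-clone swap symmetry. For independence, the quantities have distinct degrees: even degrees $2,\dots,2k$ and odd degrees $1,3,\dots,2m-1,n$. I would evaluate the Jacobian at a generic point, for instance near $u_{n+j}\to0$, where the odd family reduces to the independent tournament integrals. The $C_{n+j}$ then contribute the derivatives $\partial/\partial u_{n+j}$, which are triangular with respect to the other quantities. The Jacobian criterion \cite{P08} therefore yields full rank $m+k+1$, and the theorem follows.
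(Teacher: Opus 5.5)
There is a genuine gap at exactly the step you flag as the main obstacle. Your argument that $H_{2l+1}$ is conserved, and your involution argument for $\{H_{2l+1},H_{2l'+1}\}$, both rest on an ``anti-clone swap'' that is supposed to put Itoh's cancelling pairs in $\mathrm{To}(n)$ in bijection with those in $H$. No such bijection exists. If $t$ is balanced and you replace $i_j\in t$ by $n+j$, the edge between the swapped vertex and every other $s\in t$ is reversed. Each such $s$ then has out-degree $l\pm1$, so the swap \emph{never} carries a balanced set to a balanced set. The paper's own example $\mathrm{Ac}(\mathrm{To}(5),\{1\})$ shows the mismatch concretely. Three balanced triples contain $1$, namely $\{1,2,3\},\{1,4,5\},\{1,2,5\}$, but only one contains $6$, namely $\{3,4,6\}$, and $\{1,3,4\}$ is not balanced in $\mathrm{To}(5)$. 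Your substitution $u_{n+j}\mapsto0$ only recovers Itoh's identity for the monomials free of anti-clone variables, and those are not the problem. The paper's sentence at this point, that the anti-symmetry ``will cancel the unaccounted for terms'', is itself only a heuristic, so there is no argument there to transport either.

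Here is what a proof actually needs. For balanced $t$, $D_H[u_t]=u_t\sum_{r\notin t}\epsilon(r,t)u_r$, where $\epsilon(r,t)$ is the out-degree minus the in-degree of $r$ into $t$. Hence $D_H[H_{2l+1}]=\sum_U\Phi(U)u_U$ over $(2l+2)$-sets $U$, where $\Phi(U)$ sums $\epsilon(r,U\setminus\{r\})$ over those $r\in U$ with $U\setminus\{r\}$ balanced.

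If $U$ contains a pair $\{i_j,n+j\}$, any such $r$ lies in the pair. Then $\epsilon(r,U\setminus\{r\})$ is minus the out-minus-in degree of its partner inside the balanced set $U\setminus\{r\}$, hence $0$.

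If $U$ contains no pair, it induces the Seidel switching $T[S]^W$ of a $(2l+2)$-subset $S$ of $\mathrm{To}(n)$: the edges between the swapped set $W$ and $S\setminus W$ are reversed. You must show $\Phi(T[S]^W)=0$ for \emph{every} such switching. Concretely, no balanced $(2l+1)$-subset may have an outside vertex with $|\epsilon|\ge3$, and removable vertices of score $l$ and of score $l+1$ in $U$ must occur in equal numbers. This is not Itoh's identity in disguise, because $T[S]^W$ is generally not a sub-tournament of $\mathrm{To}(n)$. In the example, $\{2,3,5,6\}$ induces a transitive tournament, while every $4$-subset of $\mathrm{To}(5)$ is near-regular. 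For $l=1$ the required statement follows from the fact that switching preserves the class $\{$transitive, near-regular$\}$ of $4$-tournaments. For general $l$ it is a new combinatorial statement, and the brackets $\{H_{2l+1},H_{2l'+1}\}$ for $m\ge3$ need their own argument.

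Your rank count is sound. So is your independence argument near $u_{n+j}=0$ using the $C_{n+j}$ themselves rather than their symmetric functions, and it is sharper than the paper's distinct-degree reasoning.
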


By way of example, consider again the anti-cloned graph shown in \cref{fig:AntiCloning}. This graph appeared as an outlier in our earlier work \cite{visomirski2025integrability}. It is now (reasonably) easy to read off the conserved quantities of the graph. We have the Hamiltonian,
\begin{equation*}
    H_1 = u_1 + u_2 + u_3 + u_4 + u_5 + u_6,
\end{equation*}
and the major Casimir,
\begin{equation*}
    H_{5} = u_1u_2u_3u_4u_5.
\end{equation*}
The simple Casimir is,
\begin{equation*}
    H_2 = u_1u_6,
\end{equation*}
because there is only one anti-cloned vertex. Finally, the remaining conserved quantity is composed of the Casimirs of the subtournaments containing three vertices,
\begin{equation*}
    H_3 = \underbrace{u_1 u_2 u_3+u_2 u_3 u_4 + u_3 u_4 u_5 + u_1 u_4 u_5 + u_1 u_2 u_5}_{C_3} + \underbrace{u_3 u_4 u_6}_\text{Correction}.
\end{equation*}
Code to compute these conserved quantities for arbitrary anti-clone graphs is provied in the SI.

We conclude by noting the critical difference between the cloning operation of \Evri et al. and this anti-cloning operation. As \Evri et al. showed in \cite{EKV22}, cloning a vertex of \textit{any} integrable graph will produce another integrable graph. However, the anti-cloning operation does not generalise in the same way. Anti-cloning the vertex of any non-tournament integrable graph (i.e. all other \Bogo graphs, skip vertex graphs, spoked graphs, etc.) does not produce an integrable graph (and instead shows evidence that it produces a graph whose dynamics are chaotic). In fact, this operation seems to be the first one that consistently transforms integrable graphs (except balanced tournaments) into graphs with chaotic dynamics. By way of  example, when we anti-clone vertex 1 of the 5-cycle, see \cref{fig:ACChaos} left, we see immediate numerical evidence for chaos as shown by Lyapunov exponent computation in \cref{fig:ACChaos} (right).
\begin{figure}[htbp]
\centering
\includegraphics[width=0.45\textwidth]{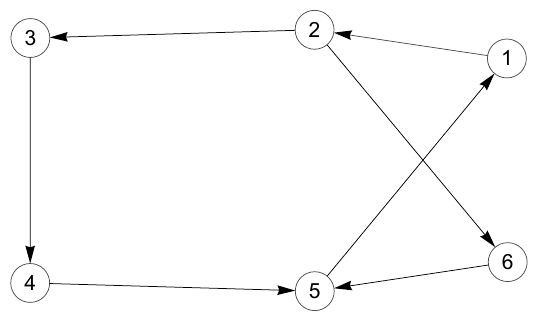}\quad
\includegraphics[width=0.45\textwidth]{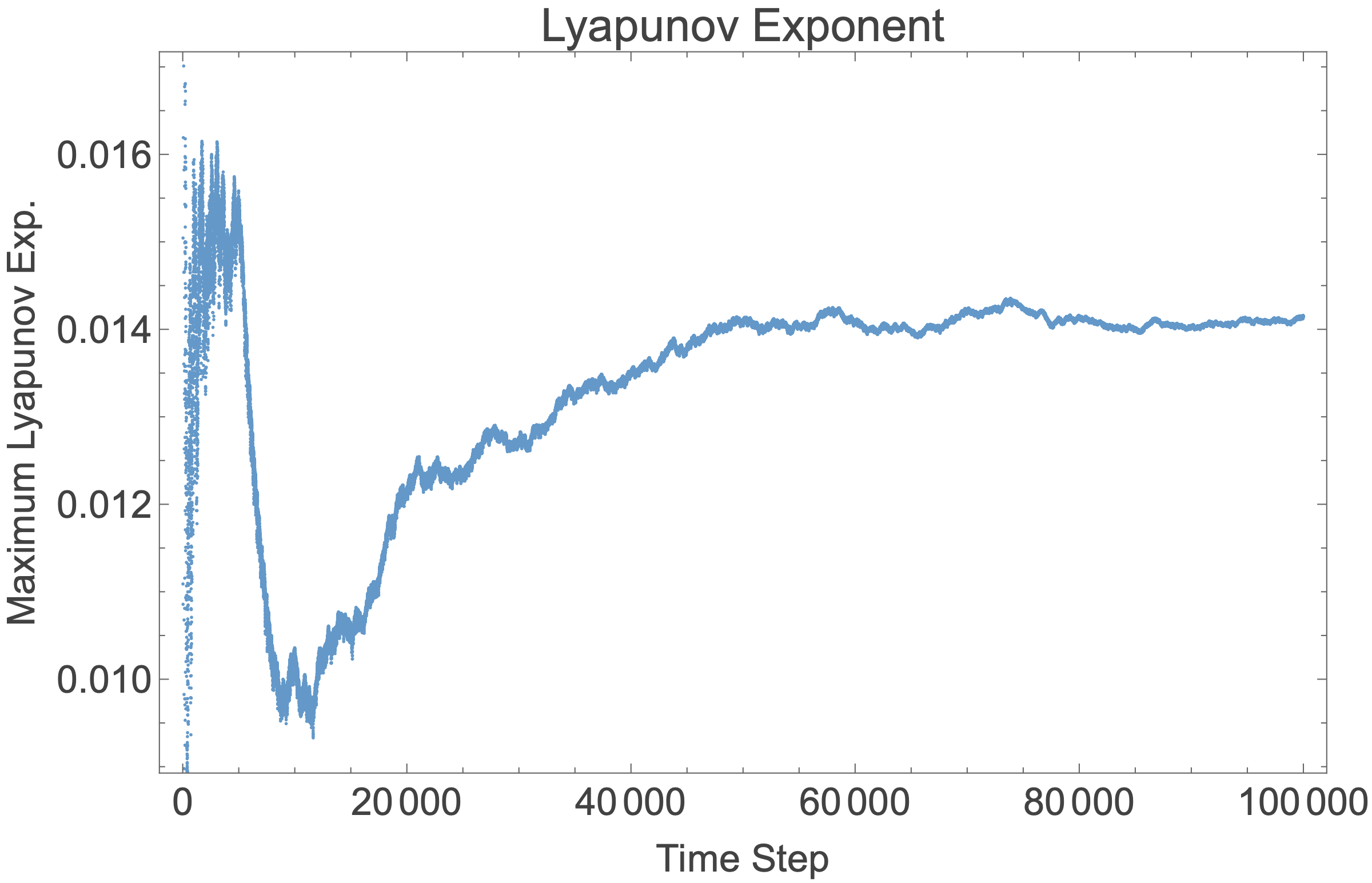}
\caption{Example of an anti-cloned 5-cycle with numerical evidence of chaos}
\label{fig:ACChaos}
\end{figure}

\section{Conjectures on \Bogo Graphs}\label{sec:Bogo}


We briefly turn our attention to the \Bogo graphs. As noted, specific subfamilies are known to be integrable (e.g., $B(n, 1)$ and $\mathrm{To}(2m+1)$) and it has been conjectured that the entire family produces integrable dynamics \cite{visomirski2025integrability}. In this section, we seek to relate our observations in this paper to the \Bogo graphs with the goal of providing a framework for a proof of the integrability of the entire family. We do not attempt the proof in this paper. 

We begin by observing that every \Bogo graph with $n$ vertices has conserved quantities,
\begin{equation*}
    H_1 = \sum_{i=1}^n u_i \qquad\text{and} \qquad
    H_n = \prod_{i=1}^n u_i,
\end{equation*}
and we look for additional conserved quantities expressible using graph-theoretic (i.e., combinatorial) language. We make the following conjecture relating the conserved quantities of the \Bogo graphs to their graph structure. 
\begin{conjecture}
A complete set of conserved quantities of any \Bogo graph $G$ can be constructed from a combination of two sets: $\mathcal{C}_0$ and $\mathcal{C}_1$. If $\mathcal{C}_0$ is non-empty, then it is a set of polynomials whose terms are constructed from index sets of vertices contained in non-edges and non-cycles of $G$ as in \cref{eqn:ItohCC} or \cref{eqn:ItohCC2} and if $\mathcal{C}_1$ is non-empty, then it contains  sums of the Casimirs of edge maximal sub-\Bogo graphs contained $G$.
\label{conj:bogoccsconjecture}
\end{conjecture}

As noted, the conserved quantities of $B(n, 1)$ (the directed cycles) consist entirely of polynomials generated by vertex index sets consisting corresponding to non-edges and non-cycles of the underlying graph. In contrast, the conserved quantities of $\mathrm{To}(2m+1)$ (the complete balanced tournaments) consist of polynomials whose terms correspond to subtournaments of the graph in question. 

By way of example, consider the \Bogo graph in \cref{fig:B72}. In addition to $H_0$ and $H_7$, the other conserved quantities are,
\begin{align*}
    &H_2 = u_1u_4 + u_1u_5 + u_2u_5 + u_2u_6 + u_3u_6 + u_3u_7 + u_4u_7\\
    &H_4 = u_1u_2u_4u_6 + u_2u_3u_5u_7 + u_3u_4u_6u_1 + u_4u_5u_7u_2 + u_5u_6u_1u_3 +\\
    &\hspace*{24em}u_6u_7u_2u_4 + u_7u_1u_3u_5.
\end{align*}
Here, the terms of $H_2$ are indexed by the vertices in the non-edges of $B(7, 2)$. On the other hand, the terms in $H_4$ are indexed by vertices in the directed 4-cycle (i.e., $B(4, 1)$) subgraphs in $B(7,2)$. These are the only sub-\Bogo graphs of $B(7,2)$ and hence the maximal such subgraphs.

Continuing our argument by example, consider $B(9, 2)$, shown in \cref{fig:B92}. 
\begin{figure}[htbp]
\centering
\includegraphics[width=0.35\textwidth]{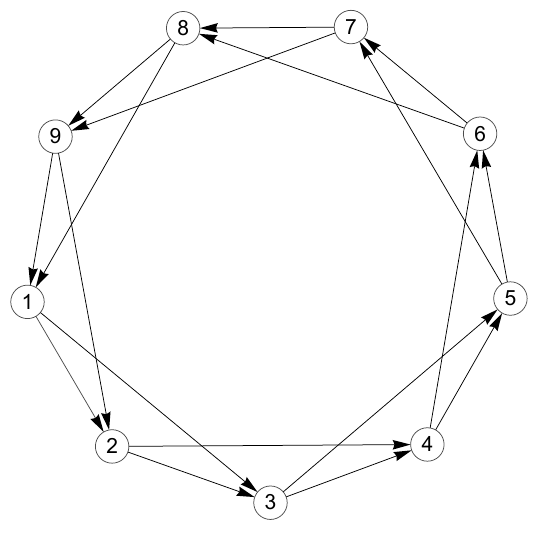}
\caption{The \Bogo graph $B(9,2)$}
\label{fig:B92}
\end{figure}
In addition to $H_1$ and $H_9$, it's conserved quantities are:
\begin{align*}
    &H_2 = u_1u_4 + u_1u_5 + u_1u_6 + u_1u_7 + u_2u_5 + u_2u_6 + u_2u_7 + u_2u_8 + u_3u_6 + u_3u_7 + u_3u_8 +\\ 
    &\hspace*{17em}u_3u_9 + u_4u_7 + u_4u_8 + u_4u_9 + u_5u_8 + u_5u_9 + u_6u_9\\
    &H_3 = u_1u_4u_7 + u_2u_5u_8 + u_3u_6u_9\\
    &H_5 = u_1u_2u_4u_6u_8 + u_1u_3u_4u_6u_8 + u_1u_3u_5u_6u_8 + u_1u_3u_5u_7u_8 + u_1u_3u_5u_7u_9 +\\
    &\hspace*{15em}u_2u_3u_5u_7u_9 + u_2u_4u_5u_7u_9 + u_2u_4u_6u_7u_9 + u_2u_4u_6u_8u_9\\
    &H_6 = u_1u_2u_4u_5u_7u_8 + u_1u_3u_4u_6u_7u_9 + u_2u_3u_5u_6u_8u_9
\end{align*}

This is the smallest graph that admits multiple conserved quantities in both $\textbf{C}_0$ and $\textbf{C}_1$. Its conserved quantities are the non-edges and non 3-cycles, as well as the sum of the Casimirs of the sub-$B(5, 1)$ graphs and sub-$B(6, 1)$ graphs in $B(9,2)$. These are edge maximal sub-\Bogo graphs. However, notice that there are multiple isomorphic copies of these graphs in $B(9,2)$ that are not all represented in the polynomials $H_5$ and $H_6$. Determining the criteria by which the isomorphic subgraphs are used in a question for future research and would be needed to complete \cref{conj:bogoccsconjecture}.

Finally, consider $B(10, 4)$ in \cref{fig:B104}. 
\begin{figure}[htbp]
\centering
\includegraphics[width=0.35\textwidth]{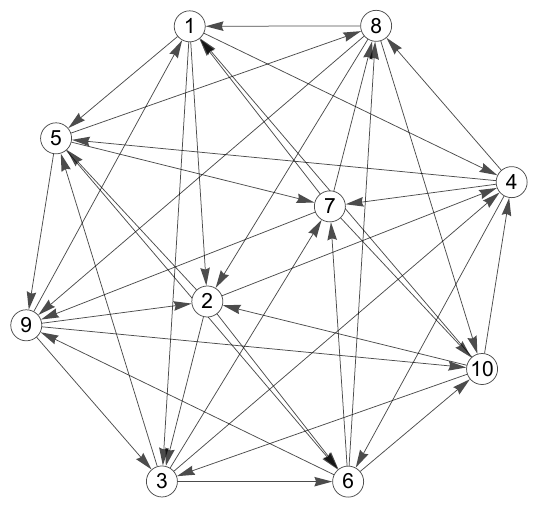}
\caption{The \Bogo graph $B(10,4)$. In addition to having cycle subgraphs (3-cycle, 4-cycle) and tournament subgraphs ($B(5, 2)$), it also has sub-\Bogo graphs ($B(6, 2), B(8, 3)$)}
\label{fig:B104}
\end{figure}
We consider this graph because some of its conserved quantities are derived from sub-\Bogo graphs that are neither cycles nor complete tournaments. Aside from $H_1$ and $H_{10}$, it's conserved quantities are,
\begin{align*}
    &H_2 = u_1u_6 + u_2u_7 + u_3u_8 + u_4u_9 + u_5u_{10}\\
    &H_3 = u_1u_3u_7 + u_1u_4u_7 + u_1u_5u_7 + u_1u_4u_8 + u_2u_4u_8 +u_1u_5u_8 + u_2u_5u_8 + u_2u_6u_8 +\\
    &\hspace*{8em}u_1u_5u_9 + u_2u_5u_9 + u_3u_5u_9 + u_2u_6u_9+u_3u_6u_9 + u_3u_7u_9 + \\
    &\hspace*{12em}u_2u_6u_{10} + u_3u_6u_{10} + u_4u_6u_{10} + u_3u_7u_{10} + u_4u_7u_{10} + u_4u_8u_{10}\\
    &H_4 = u_1u_2u_6u_7 + u_1u_3u_6u_8 + u_2u_3u_7u_8 + u_1u_4u_6u_9 + u_2u_4u_7u_9 + u_3u_4u_8u_9 +  \\ 
    &\hspace*{17em}u_1u_5u_6u_{10} + u_2u_5u_7u_{10} + u_3u_5u_8u_{10} + u_4u_5u_9u_{10}\\
    &H_5 = u_1u_3u_5u_7u_9 + u_2u_4u_6u_8u_{10}\\
    &H_6 = u_1u_2u_3u_6u_7u_8 + u_1u_2u_4u_6u_7u_9 + u_1u_3u_4u_6u_8u_9 + u_2u_3u_4u_7u_8u_9 + \\
    &\hspace*{4em} u_1u_2u_5u_6u_7u_{10} +
    u_1u_3u_5u_6u_8u_{10} + u_2u_3u_5u_7u_8u_{10} + u_1u_4u_5u_6u_9u_{10} + \\
    &\hspace*{4em}u_2u_4u_5u_7u_9u_{10} + u_3u_4u_5u_8u_9u_{10}
    u_1u_2u_3u_4u_6u_7u_8u_9 + u_1u_2u_3u_5u_6u_7u_8u_{10} + \\
    &\hspace*{20em}u_1u_2u_4u_5u_6u_7u_9u_{10} + u_1u_3u_4u_5u_6u_8u_9u_{10}
\end{align*}
These correspond non-cycles or sub-graphs as,
\begin{enumerate}
    \item The conserved quantity $H_2$ is generated from vertex index sets corresponding to the non-edges of $B(10,4)$.
    \item The conserved quantity $H_3$ is generated from vertex index sets corresponding to the three-cycles of $B(10,4)$.
    \item The conserved quantity $H_4$ is generated from vertex index sets corresponding to a subset of four-cycles $B(10,4)$. The directed 4-cycle is the maximal four vertex \Bogo subgraph of $B(10,4)$.
    \item The conserved quantity $H_5$ is generated from vertex index sets corresponding to a subset of the balanced 5-tournament subgraphs of $B(10,4)$.
    \item The conserved quantity $H_6$ is generated from vertex index sets corresponding to a subset of the $B(6,2)$ subgraphs of $B(10,4)$. These are the edge maximal six vertex \Bogo subgraphs that occur in $B(10,4)$.
\end{enumerate}

For $B(7, 2)$, $B(9, 2)$, and $B(10, 4)$ we have found a sufficient number of conserved quantities to ensure integrability (as expected from  \cref{theorem:bogoccs}). Additionally, these conserved quantities are evidently algebraically independent because they are derived from different graph structures and are all of different order. Finally, direct computation shows that all the conserved quantities for these graphs commute under the action of the bracket. We can therefore conclude that $B(7, 2)$, $B(9, 2)$, and $B(10, 4)$ produce integrable dynamics and conform to the pattern given in \cref{conj:bogoccsconjecture}.

While we have shown that these three graphs produce integrable dynamics, we are unable to generalise this proof for all $B(n, k)$. Even though \cref{conj:bogoccsconjecture} provides the general form of the conserved quantities, it fails to provide an exact prescription for which maximal \Bogo subgraphs to use for a given graph. In other words, it fails to identify what the order ($n$) and edge structure ($k$) of the expected sub-\Bogo graphs will be. Consequently, we do not know what the conserved quantities for an arbitrary \Bogo graph would be when written in combinatorial terms. For small \Bogo graphs, it is easy to simply proceed by exhaustion, and code is available for this in the SI. For arbitrarily large \Bogo graphs, it would be exceedingly difficult to find every family of sub-\Bogo graphs (and therefore each conserved quantity). Despite these limitations, we make the following conjecture.

\begin{conjecture} Assume \cref{conj:bogoccsconjecture} holds. Then the set of conserved quantities as given in the conjecture for any \Bogo graph are algebraically independent and commute under the action of the bracket (\cref{eqn:Bracket}). Consequently, the family of \Bogo graphs is integrable.
\label{conj:bogocommute}
\end{conjecture}

Proving the two conjectures would prove that the \Bogo family is an infinitely large family of graphs that produce integrable dynamics, consistent with all observations to date. We suspect that a proof of \cref{conj:bogoccsconjecture} and \cref{conj:bogocommute} could make use of the (imaginary) flow balance that has been used in this paper and in \cite{visomirski2025integrability}. In fact, it may be possible that identifying the subgraph needed to build conserved quantities can be accomplished by analysing this flow balance property. This is (of course) left to future work. 

\section{Complete Characterization of All Six Vertex Graphs}\label{sec:Taxonomy}

Visomirski and Griffin \cite{visomirski2025integrability} provide a taxonomy of the known integrable graphs with up to six vertices (see Section 6, Table 2 of \cite{visomirski2025integrability}). We now update this taxonomy to include the new families of graphs introduced within this paper. 
\begin{table}[htp!]
    \centering
    \begin{tabular}{lccccc}
    &\multicolumn{4}{c}{Vertex size} & \\ 
    \hline
        Family & 3 & 4 & 5 & 6 & Total\\
        \hline
        $B(n, k)$ & 1 & 1 & 2 & 2 & 6\\
        Skip-Vertex & 0 & 1 & 0 & 3 & 4\\
        Cloned & 0 & 1 & 7 & 24 & 32\\
        Embedding & 0 & 0 & 1 & 11 & 12\\
        Spoke & 0 & 0 & 1 & 1 & 2 \\
        Anti-Cloned & 0 & 0 & 0 & 1 & 1 
    \end{tabular}
    \caption{Updated Breakdown of Integrability by Family and Vertex Size}
    \label{tab:graphfamilies}
\end{table}
We note that several graphs belong to multiple families. These overlaps are not reflected within this table. Instead, graphs are assigned to the first family by order of discovery (e.g., cloned vertex graphs \cite{EKV22} precede, embedded graphs, which precede anti-cloned graphs etc.).

\section{Conclusions and Future Directions}\label{sec:Conclusions}

In this paper, we introduced two new families of integrable graphs, extending the taxonomy introduced by Visomirski and Griffin \cite{visomirski2025integrability} as well as the larger body of work on integrable Lotka-Volterra systems \cite{KM75, M74,I08,BIY08,EKV21,EKV22,PG23}. We defined a family of graph structures, the \textit{spoked graphs}, identified the conserved quantities for the dynamical systems generated by each family member and consequently showed that an arbitrary spoked graph is integrable (i.e., produces integrable dynamics). Next, we introduced a new graph operation, vertex anti-cloning, that is similar to the cloning operation originally defined by \Evri et al. in \cite{EKV22}. We showed that this  vertex operation, generates a new family of integrable graphs when it is applied to balanced tournaments. However, we also found that this operation did not generalise in the same way that cloning did and presented numerical evidence to suggest that applying the anti-cloning operation to a non-tournament integrable graph will create a graph that produces chaotic dynamics. Finally, we provided two conjectures whose proof would establish the integrability of \Bogo graphs using the graph structure-based approach. 

There are several future directions for this research. First, notice in our analysis we have not used Lax pairs to identify conserved quantities or prove integrability as would be typical. One area of future research could be dedicated to determining these Lax pairs for all the integrable families of graphs discussed here and in \cite{visomirski2025integrability}. Furthermore,  proving the conjectures raised in \cref{sec:Bogo} would result in a proof of the full integrability of the \Bogo family of graphs. Additionally, it would be worthwhile to further investigate the anti-cloning operation, specifically in the non-tournament case. The anti-cloning operation seems to transform most integrable graphs to chaotic graphs. Formally proving that the resulting maximal Lyapunov from such a graph would both show that chaos can emerge in this class of dynamics and that anti-cloning is an operation that breaks integrability. To the authors' knowledge, an operation transforming integrable graphs to chaotic graphs would be novel, and a general result on anti-cloning of (e.g.) directed cycles would represent the first known family of chaotic graphs.  Finally, while we have fully completed the taxonomy for all graphs with up to six vertices, we have no way to tell whether there are more integrable families which may start with seven or more vertices. The new families we have found, combined with the graph derivative operator and its properties, could perhaps provide a clue for a potential unified theory of integrability for these graphs.

\bibliographystyle{iopart-num}
\bibliography{Integrability2}
\end{document}